\newenvironment{proofof}[1]{\par\noindent{\it Proof of #1:}}{\qed}
\newcommand{\remove}[1]{}
\newcommand{\ignore}[1]{}%
\newtheorem{theorem}{Theorem}[section]
\newtheorem{lemma}[theorem]{Lemma}
\newtheorem{corollary}[theorem]{Corollary}
\newtheorem{proposition}[theorem]{Proposition}
\newtheorem{claim}[theorem]{Claim}
\newtheorem{definition}[theorem]{Definition}
\newcommand{\por}{p_{or}}
\newcommand{\ellora}{\ell_{or}^1}
\newcommand{\ellorb}{\ell_{or}^2}
\newcommand{\ellori}{\ell_{or}^i}
\newcommand{\bellora}{\bar{\ell}_{or}^1}
\newcommand{\For}{F_{or}}
\newcommand{\Fand}{F_{and}}
\newcommand{\bFor}{\widehat{F}_{or}}
\newcommand{\bFand}{\bar{F}_{and}}
\newcommand{\lowandi}{L_{and}^i}
\newcommand{\lowanda}{L_{and}^1}
\newcommand{\lowandb}{L_{and}^2}
\newcommand{\lowori}{L_{or}^i}
\newcommand{\lowora}{L_{or}^1}
\newcommand{\loworb}{L_{or}^2}
\begin{document}

%\title{Market Game
\title{The AND-OR game: Equilibrium Characterization \\
(working paper)}
\author{
Avinatan Hassidim\inst{1}
%\thanks{Google, Tel Aviv.}
%
\and Haim Kaplan\inst{2}
\ignore{\thanks{School of Computer science, Tel Aviv University.
This research was supported in part
 by a grant from the Israel
Science Foundation (ISF), by a grant from United States-Israel
Binational Science Foundation (BSF) and by The Israeli Centers of
Research Excellence (I-CORE) program, (Center  No. 4/11). .
Email:haimk@cs.tau.ac.il.}}
\and Yishay Mansour\inst{2} \ignore{\thanks{School of Computer
science, Tel Aviv University. Email:mansour@cs.tau.ac.il. This
research was supported in part by a grant from the the Science
Foundation (ISF), by a grant from United States-Israel Binational
Science Foundation (BSF), by a grant from the Israeli Ministry of
Science (MoS),  by The Israeli Centers of Research Excellence
(I-CORE) program, (Center No. 4/11) and by the Google
Inter-university center for Electronic Markets and Auctions. }}
\and Noam Nisan\inst{3} \ignore{\thanks{School of Computer Science
and Engineering, Hebrew University of Jerusalem. Supported by a
grant from the Israeli Science Foundation (ISF),  by The Israeli
Centers of Research Excellence (I-CORE) program, (Center  No. 4/11)
and by the Google Inter-university center for Electronic Markets and
Auctions.}}
% and by the Google Inter-university center for Electronic Markets and Auctions.
%
}

\institute{Bar-Ilan University and Google Inc. \and Tel Aviv
University
%\and Tel Aviv University
\and Microsoft Research and Hebrew University of
Jerusalem}

\maketitle

\begin{abstract}
We consider a simple simultaneous first price auction for multiple
items in a complete information setting. Our goal is to completely
characterize the mixed equilibria in this setting, for a simple, yet
highly interesting, {\tt AND}-{\tt OR} game, where one agent is
single minded and the other is unit demand.
%This game is especially interesting since it is a basic example for which Walrasian
%equilibrium does not exist.
%
\ignore{
 A simple {\tt AND}-Or games has two agents.
One agent, the {\tt AND}, wants both items and has value $1$ while
the other agent, the {\tt OR}, has value $v$. For $v\leq 1/2$ there
are simple equilibria
 We completely characterize the mixed equilibria of the {\tt AND}-{\tt OR} game, showing that
they are all slight variants of a single one.}
\end{abstract}

\section{Introduction}

\ignore{
% Background Walrasian Eq

Much of the economic theory challenges can be view as {\em how to
allocate scarce resources}. The main goal is to show that
appropriate pricing of the resources would lead to clearing the
markets, namely, supply and demand match. This is at the core of
many of the general equilibrium theory results, such as the famous
Arrow-Debrue model.

Unfortunately, there is a hidden assumption of convexity, an
assumption that breaks down when there are indivisible good. This is
not a minor deficiency, since many of the interesting motivations
for algorithmic game theory are indeed this case. This is the case
even if we assume that the agents are quasi linear with respect to
monetary transfer.
% Beyond Walrasian
}

Walrasian equilibrium is one of the most basic models in economic
theory. Items are priced in such a way that for each item either the
market clears (supply equals demand) or if there is an excess supply
it is priced at zero.
When there is a Walrasian equilibrium, it captures nicely the
``right'' pricing of items. Unfortunately, Walrasian equilibria are
guarantee to exists only for limited classes of agents'
valuations, namely gross-substitute valuations.

A different way of presenting the market is to
auction the items simultaneously, and analyze the resulting
equilibria. For simultaneous first price auction, the resulting pure
Nash equilibria are in one-to-one correspondence to the Walrasian
equilibria, with the same prices and allocations
\cite{HassidimKMN11,Bikhchandanil99}.
Considering the market as a simultaneous first price auction allows us
to consider it as a game, and study the
resulting equilibria. Fortunately, there is always a mixed
Nash equilibrium, with some tie breaking rule, and approximate Nash
equilibrium with any tie breaking rule \cite{HassidimKMN11}.

A typical example of a case where there are no Walrasian equilibria
is when there are two agents, one is single minded while the other
is unit demand. The {\tt AND}-{\tt OR} game is exactly this setting, with
two items. The {\tt AND} valuation is $1$ if it gets  both items and zero otherwise,
while the {\tt OR} has a value of $v$ for any single item (or both) and zero otherwise.
For $v>1/2$ there is no Walrsian equilibrium (or equivalently, pure
Nash equilibrium) and this is the interesting and challenging case
we focus on in this paper. A specific mixed Nash equilibrium for the {\tt
AND}-{\tt OR} game was presented in \cite{HassidimKMN11} (we
review it in Section~\ref{section-model--and-or}).

In this work we completely characterize the resulting mixed Nash equilibria
of the {\tt AND}-{\tt OR} game.
We show that while the equilibrium is technically not unique, it is
almost unique. More precisely we show that all mixed Nash equilibria
to the {\tt AND}-{\tt OR} game have the same marginal bid
distributions for each item. The resulting prices and
utilities of the {\tt AND} and {\tt OR} agents are the same in all mixed Nash equilibria.
We complement our characterization with a study of the properties of
these equilibria.

\ignore{
%YM -weak, should we delete?!
The issue of uniqueness and characterization of equilibria is many times central in game
theory. One of the reasons is it predictability. If we have multiple
equilibria, we can not be certain which equilibrium would be the
outcome, in contrast to a unique equilibrium has clearly solves this
problem. The characterization of the equilibria allows us to argue regarding all the possible equilibria.}

\smallskip
\noindent{\bf Related Work:}
There is a recent interest in algorithmic game theory to study
simple simultaneous and sequential auctions to allocate multiple
goods. Bhawalkar and Roughgarden \cite{BhawalkarR11} studied second
price simultaneous auctions and their Price of Anarchy (PoA). They
show that under the assumption of conservative bidding the PoA for
sub-modular valuations is a constant and for sub-additive valuations
it is logarithmic.

Hassidim et al. \cite{HassidimKMN11} studied a market which is based
on first price auctions. They show that the pure equilibria
correspond to Walarsian prices, and prove that a
mixed Nash equilibrium always exists\footnote{the issue is that the prices are
continuous.}. Similar to \cite{BhawalkarR11}, they show for
sub-modular valuations a constant PoA and for sub-additive
valuations a logarithmic PoA.

Leme at al. \cite{LemeST12soda} studied a sequential auctioning of
the items, where the solution concept is a sub-game perfect
equilibrium. They show that a pure sub-game perfect equilibrium
always exists, and that for unit demand buyers the PoA is at most
$2$ while for submodular buyers it might be unbounded.

Szentes \cite{Szentes07} studied a game with two identical bidders
and two identical items in the full information model, where both
bidders are either sub-additive or supper-additive. In addition to
simple pure Nash equilibria, he exhibits a family of symmetric mixed
Nash equilibria. This work was extended in \cite{SzentesR03a} to
three items (where each of the two agent desires at least two of the items)
and in \cite{SzentesR03} to multiple items and agents
(where again, each agent desires a strict majority of the items).
All the above works exhibit specific symmetric mixed equilibria,
and none of them address the characterization of all mixed Nash equilibria.

%Previous results - Our - Tim

%Our contribution

% Related work Our (PoA + Existance) + Tim (PoA) From Econ

\ignore{ In our paper \cite{} we considered the following game
between an {\tt AND} player and an {\tt OR} player who are competing
over two items. Here is the setting:

\begin{itemize}
\item
There are two player {\tt AND} and {\tt OR} who are competing to buy
two (non-identical) items.
\item
{\tt AND} gets value 1 if he wins both items (and zero if he wins
only one or none).
\item {\tt OR} gets value
$v>1/2$ when he wins some item (and still $v$ if he wins both).
\item
Each player places bids $(x,y)$ on the two items and the highest bidder on each item wins it and pays
the amount that he bid.   We denote by $H$ the maximum allowed bid in the game, so $0 \le x,y \le H$.
(Clearly it suffices
to have $H \le max(1,v)$.)
\item
When both players bid the same amount on some item a {\em tie
breaking rule} specifies who wins the item (and pays its bid); this
rule may be randomized. We make the assumption that the tie breaking
rule only depends on the bids for the tied item (and not on the bids
on the other item).
\item We are modeling this as a game of full information and are interested in its mixed-Nash equilibria $(\Fand,\For)$,
where these are the CDF's of the two players' bids.
\end{itemize}

In \cite{HassidimKMN11} we identified the following Nash
equilibrium: {\tt AND} chooses $0 \le x \le 1/2$ at random according
to CDF $(v-1/2)/(v-x)$ and bids $(x,x)$; {\tt OR} chooses $0\le x
\le 1/2$ at random according to CDF $x/(1-x)$, and bids either
$(0,x)$ or $(x,0)$ with equal probability.  The two-dimensional CDFs
of their bids (for $0 \le x,y \le 1/2$) are thus:
$\Fand^*(x,y)=(v-1/2)/(v-min(x,y))$ and $\For^*(x,y) =
(x/(1-x)+y/(1-y))/2$.

In this addendum we completely characterize the set of Nash
equilibria. }

%\newpage
\section{The {\tt AND}-{\tt OR} Game: Model and preliminaries} \label{section-model--and-or}

We have two players an {\tt AND} player and {\tt {\tt OR}} player.
The {\tt AND} player has a value of $1$ if he gets both the items in
$M=\{1,2\}$, and the {\tt OR} player has a value of $v$ if she gets
any item in $M$. Formally, $v_{and}(M)=1$ and for $S\neq M$ we have
$v_{and}(S)=0$, also, $v_{or}(T)=v$ for $T\neq\emptyset$ and
$v_{or}(\emptyset)=0$. Both players have a quasi-linear utility with
respect to money, i.e., getting subset $S$ and paying price $p$ has
a utility of $u(S)=v(S)-p$, and are risk neutral. We assume a full
information setting, namely both players know each other's
valuation.

In the {\tt AND}-{\tt OR} game both players participate in two
simultaneous first price auctions, one for each item.
Namely, each player places bids $(x_1,x_2)$ where $x_i$ is the bid
on item $i$. The highest bidder on each item wins it and pays its
bid. We will denote by $H$ the maximum allowed bid in the game, so
$0 \le x,y \le H$. (Clearly it suffices to have $H \le max(1,v)$.)
%
% Namely, each submits a bid of each item, the highest bid wins and pays its bid.
%
Completely specifying a first price auction requires a {\em tie
breaking rule}, which specifies the winner in case of identical
bids.
%, and the rule may be randomized.
%
We make the assumption that the tie breaking rule only depends on
the bids for the tied item (and not on the bids on the other item),
and allow for a randomized tie breaking rule.

%{\bf HK We do not say where we use this assumption}

When $v\leq 1/2$ there is a Walresian equilibrium for any price
$p\in[v,1/2]$ per item. This implies a pure Nash Equilibrium in
which both players bid $p$ on each item, and the {\tt AND} player
wins both items (assuming the tie breaking rule favors {\tt AND}).
For this reason we are interested in the case when
$v>1/2$. It is easy to verify that in this case is no Walresian
equilibrium. For completeness we show that there is no pure Nash
equilibrium.

\begin{claim}
There is no pure Nash equilibrium in the {\tt AND}-{\tt OR} game.
\end{claim}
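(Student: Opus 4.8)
The plan is to argue by contradiction: suppose $(b_{and}, b_{or})$ is a pure Nash equilibrium, where $b_{and} = (a_1, a_2)$ and $b_{or} = (o_1, o_2)$ are the (deterministic) bid vectors. I would first observe that in any such equilibrium the {\tt AND} player either wins both items or wins neither, since winning exactly one item gives him utility $-a_i < 0$ (or $0$ if $a_i = 0$), which he can strictly improve by bidding $0$ on both; so without loss of generality the {\tt AND} player wins $\emptyset$ or $M$.

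Next I would split into these two cases. \emph{Case 1: {\tt AND} wins both items.} Then {\tt AND} pays $a_1 + a_2$ and has nonnegative utility, so $a_1 + a_2 \le 1$; meanwhile {\tt OR} gets $\emptyset$ and utility $0$. But {\tt OR} can deviate: by bidding slightly above $a_i$ on the cheaper item (say item $i$ with $a_i \le 1/2$), she wins it and pays roughly $a_i \le 1/2 < v$, gaining positive utility — contradiction. (One must be slightly careful about the tie-breaking rule and the infimum-vs-minimum issue, but since $a_i < v$ strictly there is always room to bid $a_i + \epsilon < v$ and win outright for small $\epsilon > 0$.) \emph{Case 2: {\tt AND} wins neither item.} Then {\tt OR} wins at least one item — say she wins item $1$ at price $o_1 \ge a_1$. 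For {\tt AND} not to want to deviate to winning both items, he must be unable to profitably win item $1$ and item $2$ together: winning both costs him more than $o_1 + o_2$ (beating both {\tt OR} bids), and this must exceed his value $1$, i.e. {\tt OR}'s total bid must be high enough to deter {\tt AND}. On the other hand {\tt OR}'s utility is $v - o_1$ (if she wins only item $1$) or $v - o_1 - o_2$ (if she wins both), and for this to be an equilibrium she must not want to lower any winning bid, which forces her winning bid(s) down to essentially the {\tt AND} bids on those items. Combining "{\tt OR}'s bids must be large to deter {\tt AND}" with "{\tt OR} wants her bids as small as possible (down to {\tt AND}'s bids)" and using $v > 1/2$ yields a contradiction: {\tt AND}'s bids on the two items cannot simultaneously be small enough for {\tt AND} to be content yet large enough to deter him.

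The main obstacle I anticipate is handling the tie-breaking rule and the open/closed boundary of the bid space cleanly, since "beat a bid of $b$" is not a well-defined best response when ties are possible — the standard fix is that a player who wants to win must bid strictly above the opponent, so profitable deviations are expressed via $\epsilon$-arguments and one shows the purported equilibrium payoff can be strictly exceeded for some small $\epsilon$. A secondary subtlety is the bookkeeping in Case 2 across the sub-cases of whether {\tt OR} wins one item or both, and tracking which of $a_1, a_2$ can be zero; but in every sub-case the tension is the same, namely that $v > 1/2$ makes item-by-item prices that deter {\tt AND} too expensive to be an {\tt OR} best response, so no consistent assignment of $(a_1, a_2, o_1, o_2)$ survives.
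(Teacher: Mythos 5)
Your Case 1 is fine and your overall framing by allocation (who wins what) is a genuinely different decomposition than the paper's. The paper instead fixes {\tt AND}'s bid $(x,y)$ and argues directly: if $x+y>1$ then {\tt AND}'s utility against {\tt OR}'s best response is negative, and if $x+y\le 1$ then {\tt OR}'s best response outbids the cheaper {\tt AND} bid, so {\tt AND} wins at most one item and gets utility $-\max(x,y)<0$ unless $(x,y)=(0,0)$; finally with $(x,y)=(0,0)$, {\tt OR} wins one item essentially for free and {\tt AND} profitably deviates to $(2\epsilon,2\epsilon)$. That route avoids having to reason about the ``{\tt AND} wins neither'' allocation at all, which is exactly where your argument runs into trouble.

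Concretely, there are two gaps. First, your reduction to ``{\tt AND} wins $\emptyset$ or $M$'' is not a WLOG: if {\tt AND} wins exactly one item at a zero bid (a tie at $0$ broken in {\tt AND}'s favor), his utility is already $0$ and bidding $(0,0)$ does not strictly improve it; you must rule this out separately (e.g.\ by showing {\tt AND} would then deviate to $(\epsilon,\epsilon)$ and win both cheaply, or {\tt OR} would deviate to steal the free item). Second and more importantly, Case 2 is asserted rather than proved. Note that in Case 2 {\tt OR} necessarily wins \emph{both} items (each item has a winner), so your sub-case ``{\tt OR} wins only item $1$'' cannot occur as stated, and your closing sentence speaks of {\tt AND}'s bids needing to ``deter him,'' which conflates the roles: deterrence is a constraint on $o_1+o_2$, while {\tt AND} is already content with utility $0$. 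The actual contradiction needs a chain like: {\tt OR} would drop the more expensive item unless one of $o_1,o_2$ is $0$; say $o_2=0$, forcing $a_2=0$; deterrence then gives $o_1\ge 1$; but {\tt OR} could deviate to $(0,\epsilon)$ and win item $2$ at price $\epsilon$ for utility $v-\epsilon>v-o_1$. Spelling this out (and the symmetric and degenerate variants) is exactly the bookkeeping you flagged but did not do, so as written the proof of Case 2 does not yet establish a contradiction.
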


\begin{proof} Assume for contradiction there was a pure Nash
equilibrium with some tie breaking rule. Let the {\tt AND} bid
$(x,y)$.
%W.l.o.g. assume that $x\leq y$, and either $x+y\leq1$ or $x+y>1$.
%
Since the {\tt AND} value is $1$, if $x+y>1$ the {\tt AND} has a
negative utility for any best response of the {\tt OR}. In the case
that $x+y\leq 1$, the best response of the {\tt OR} is to out bid
the lower bid of the {\tt AND}.
%Since the lower bid is at most $1/2$ the {\tt OR} has a strictly positive utility.
Therefore, {\tt AND} will have a non-negative utility only if it bids
$(0,0)$. In case that {\tt AND} bids $(0,0)$, the {\tt OR} wins one item at a price of at
most $\epsilon$. But then the {\tt AND} can deviate and bid
$(2\epsilon, 2\epsilon)$ and have a positive utility. Contradiction. \qed
\end{proof}

Next, we describe the mixed Nash equilibrium from
\cite{HassidimKMN11}.
\begin{itemize}

\item The {\tt {\tt AND}} player bids $(y,y)$ where $0 \le y \le 1/2$ according to cumulative distribution
$\Fand^*(y)=(v-1/2)/(v-y)$  (where $\Fand^*(y)=Pr[bid \le y]$). In
particular, There is an atom at 0: $Pr[y=0]=1-1/(2v)$.

\item The {\tt OR} player bids $(x,0)$ with probability 1/2 and $(0,x)$ with probability 1/2,
where $0 \le x \le 1/2$ is distributed according to cumulative
distribution  $\For^*(x)=x/(1-x)$.

\end{itemize}

Note that since the {\tt OR} player does not have any mass points in
his distribution, the equilibrium holds for any tie breaking
rule. The proof that this is indeed an equilibrium is in
\cite{HassidimKMN11}. The main goal of this paper is to characterize
the mixed Nash equilibria of the {\tt AND}-{\tt OR} game, and to
show that this is ``essentially'' the only mixed Nash equilibrium.

\section{Characterization of the mixed Nash equilibria}

%\subsection{Main result and outline of the proof}

The following theorem is our main result which characterizes the
mixed Nash equilibria of the {\tt AND}-{\tt OR} game. The {\tt OR}
player has to play the mixed strategy $\For^*$. The {\tt AND} player
can play various mixed strategies, but their marginal bid
distribution on each item is identical to $\Fand^*$, and the
probability mass at $(0,0)$ is the same as of $\Fand^*$. While there
is more than a single equilibrium, they all have ``essentially'' the
same outcomes, i.e., the same expected utilities, payments, and
allocation probabilities.

\begin{theorem}
\label{thm:main}
 A pair of strategies $(\Fand,\For)$ is an
equilibrium of the {\tt AND}-{\tt OR} game if and only if
\begin{enumerate}
\item {\tt OR}'s strategy is
$\For=\For^*$.
\item {\tt AND}'s strategy has the same marginal distributions of the bids for each item as
$\Fand^*$:  $\Fand(x,H)=\Fand^*(x,H)$ and $\Fand(H,y)=\Fand^*(H,y)$
for all $x,y$, and the same probability of $(0,0)$:
$\Fand(0,0)=\Fand^*(0,0)$.
%%%, and for any $x,y>0$, $\Fand(x,0)=\Fand(0,0)=\Fand(0,y)$.
%Also, $\Fand$ is weakly dominated as a strategy by $\Fand^*$.
\end{enumerate}
%\item
Furthermore, $\Fand$ is weakly dominated as a strategy by $\Fand^*$  and the following quantities are the same as in the equilibrium
$(\Fand^*,\For^*)$:
%\begin{enumerate}
%\item
(1) The allocation probabilities, i.e. the probabilities of each
player winning each bundle.
%\item
(2) The expected payments, and utilities of each player.
Thus also the expected revenue and social welfare.
%\end{enumerate}
\end{theorem}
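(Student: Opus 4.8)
The plan is to reduce the whole statement to two payoff identities. Fix any (possibly correlated) bid distribution for {\tt AND}. For a bid $s$ of {\tt OR} on item $i$ write $\widehat{G}_i(s)$ for the probability that this bid wins item $i$; it depends only on {\tt AND}'s marginal distribution on item $i$ (and on the tie--breaking rule, which I fix in favour of {\tt AND}, so that a bid of $0$ never wins). For a bid $(s_1,s_2)$ of {\tt OR} write $W(s_1,s_2)$ for the probability that {\tt OR} wins both items. Since {\tt OR} gets value $v$ whenever she wins at least one item, her chance of getting value $v$ is $\widehat{G}_1(s_1)+\widehat{G}_2(s_2)-W(s_1,s_2)$, and a direct computation gives
\[
u_{or}(s_1,s_2)=(v-s_1)\,\widehat{G}_1(s_1)+(v-s_2)\,\widehat{G}_2(s_2)-v\,W(s_1,s_2),
\]
and, recognizing the single--item terms as $u_{or}(s_1,0)$ and $u_{or}(0,s_2)$,
\[
u_{or}(s_1,s_2)=u_{or}(s_1,0)+u_{or}(0,s_2)-v\,W(s_1,s_2).
\]
Symmetrically, computing {\tt AND}'s payoff against $\For^*$ and using $\For^*(t)(1-t)=t$ shows $u_{and}(a,b)=0$ for every $(a,b)\in[0,1/2]^2$ and $u_{and}(a,b)<0$ once a coordinate exceeds $1/2$; hence every distribution supported on $[0,1/2]^2$ --- in particular any $\Fand$ satisfying item~2 --- is a best response to $\For^*$.

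For the ``if'' direction I would argue as follows. Let $\Fand$ satisfy items~1--2. For $s\in(0,1/2]$ the coincidence of marginals gives $\widehat{G}_1(s)=\Fand^*(s)$ (as $\Fand^*$ has no atom on $(0,1/2]$), so $u_{or}(s,0)=(v-s)\Fand^*(s)=v-1/2$; also $u_{or}(0,0)=0$ and $u_{or}(s,0)<v-1/2$ for $s>1/2$, and symmetrically in item~$2$. For a genuine two--item bid, the atom $\Fand(0,0)=\Fand^*(0,0)=1-1/(2v)$ forces $W(s_1,s_2)\ge 1-1/(2v)$, so the second identity yields $u_{or}(s_1,s_2)\le 2(v-1/2)-v(1-1/(2v))=v-1/2$. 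Thus $v-1/2$ is {\tt OR}'s best--response value and $\For^*$, supported exactly on the bids attaining it, is a best response; combined with the last sentence of the previous paragraph, $(\Fand,\For^*)$ is an equilibrium.

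For the ``only if'' direction, let $(\Fand,\For)$ be an equilibrium. First I would pin down {\tt AND}'s marginals and show {\tt OR} bids positively on at most one item: indifference of {\tt OR} over (the relevant projection of) its support gives $(v-s)\,\widehat{G}_1(s)=u_{or}$; a downward jump of $\widehat{G}_1$ at an atom of {\tt AND}'s item--$1$ marginal is incompatible with this on the two sides of the atom, so that marginal has no atom on $(0,1/2]$ and the projection is all of $(0,1/2]$; letting $s\uparrow 1/2$ --- where $\widehat{G}_1=1$, no player bidding above $1/2$ by a short exchange argument --- gives $u_{or}=v-1/2$, whence {\tt AND}'s item--$1$ marginal CDF equals $(v-1/2)/(v-s)=\Fand^*(s)$ with an atom $1-1/(2v)$ at $0$, symmetrically for item~$2$. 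Next, with the marginals fixed, {\tt AND}'s payoff against $\For$ splits into an item--$1$ term and an item--$2$ term; since $\mathrm{supp}(\Fand)$ projects onto $[0,1/2]$ in each coordinate and {\tt AND} is indifferent on it, each term is constant, which --- together with the exclusion of {\tt OR} bidding positively on both items --- forces {\tt OR}'s per--item bid to have CDF $x/(1-x)$ on $[0,1/2]$ and the two items to be used with equal probability, i.e.\ $\For=\For^*$, and also $u_{and}=0$. Finally, applying {\tt OR}'s no--deviation condition to two--coordinate bids $(s_1,s_2)$ with $s_1,s_2\downarrow 0$ and using the second identity with $u_{or}=v-1/2$ gives $\Pr[B_1=B_2=0]\ge 1-1/(2v)$; since also $\Pr[B_1=B_2=0]\le\Pr[B_1=0]=1-1/(2v)$, we get $\Fand(0,0)=\Fand^*(0,0)$.

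For the remaining claims: weak dominance holds because against \emph{any} strategy $\For$, {\tt AND}'s expected payment is a function of its marginal bid distributions alone (whether {\tt AND} wins item $i$, and the amount it then pays, depend only on its bid on item $i$), while its probability of winning both items, $\Pr[B_1\text{ beats {\tt OR} on }1,\ B_2\text{ beats {\tt OR} on }2]$, is --- for fixed marginals --- maximized by the comonotone coupling $(Y,Y)$ used by $\Fand^*$; hence $u_{and}(\Fand^*,\For)\ge u_{and}(\Fand,\For)$ for all $\For$, strictly for some $\For$ bidding positively on both items whenever $\Fand\neq\Fand^*$. Outcome equivalence holds because under $\For^*$ {\tt OR} is active on exactly one uniformly random item, so the winner of each item, the payments, and even the event that {\tt AND} wins both items (it automatically takes the item {\tt OR} leaves at $0$) are functions of {\tt AND}'s marginals alone, which coincide with those of $\Fand^*$; thus all allocation probabilities, payments, utilities, revenue, and welfare agree with $(\Fand^*,\For^*)$. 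I expect the ``only if'' direction to be the main obstacle --- extracting the exact distributions from the indifference conditions, in particular establishing $u_{or}=v-1/2$ and $u_{and}=0$, the absence of {\tt AND}--atoms on $(0,1/2]$, and the exclusion of {\tt OR} bidding positively on both items --- all while handling the tie--breaking at $0$ correctly.
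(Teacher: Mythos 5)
Your high-level plan has the right ingredients --- the payoff identity for {\tt OR}, the comonotone (Fr\'echet) coupling argument for weak dominance, the item-by-item decomposition of {\tt AND}'s payoff against $\For^*$ --- but the ``only if'' direction is where almost all of the paper's work lives, and your sketch skips exactly those steps. The indifference relation you write, $(v-s)\,\widehat{G}_1(s)=u_{or}$, is the utility of {\tt OR} from the bid $(s,0)$; it is a valid indifference condition only for points $s$ where {\tt OR} actually places an \emph{on-axis} bid $(s,0)$. But the support of $\For$ in an arbitrary equilibrium could in principle contain two-coordinate bids $(s,t)$ with $t>0$, in which case the indifference condition reads $u_{or}(s,0)+u_{or}(0,t)-vW(s,t)=u_{or}$ and does \emph{not} determine $\widehat{G}_1$. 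You do flag that ``the exclusion of {\tt OR} bidding positively on both items'' is needed, but you never supply it, and this is a genuine obstacle, not bookkeeping: the paper proves it (Lemma~\ref{lem:or-low}) only \emph{after} replacing $\Fand$ by its maximally correlated version $\bFand$ via Lemma~\ref{eqtoeq}, since under maximal correlation $\por(x,y)=\max(\bFand(x,H),\bFand(H,y))$ and the on-axis deviation becomes a free improvement in payment. Without that reduction, the fact that any $(s,t)$ with both coordinates positive is (weakly) dominated for {\tt OR} is not immediate. You mention comonotone coupling only for the weak-dominance clause, not as a structural reduction used in the characterization itself.

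Two further gaps in the same step. First, even granting on-axis bidding, the step ``that marginal has no atom on $(0,1/2]$ \emph{and the projection is all of} $(0,1/2]$'' needs the interval structure (the paper's Lemma~\ref{interval}), which rules out both interior gaps in the supports and a strictly positive lower endpoint (the paper's Lemma~\ref{lem:lowand}); ruling out an atom at a single point does not give you full support on $(0,1/2]$. Your derivation $\widehat{G}_1(s)=(v-1/2)/(v-s)$ implicitly uses that the indifference holds on all of $(0,1/2]$ and that $u_{or}=v-1/2$ (i.e.\ the top of the interval is exactly $1/2$), both of which are consequences of the interval/endpoint lemmas rather than of atom-freeness alone. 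Second, fixing the tie-breaking rule ``in favour of {\tt AND}'' is not innocuous in the ``only if'' direction: the theorem is stated for an arbitrary tie-breaking rule (depending only on the tied item), and the paper's case analysis around atoms (e.g.\ in Lemma~\ref{interval}) explicitly handles ties won by either player with arbitrary probability. The ``if'' direction and the outcome-equivalence clause are essentially fine once one notes (as the paper does in a footnote) that $\Fand(0,0)=\Fand^*(0,0)$ together with identical marginals forces $\Fand(x,0)=\Fand(0,0)=\Fand(0,y)$, so that $\Fand$ has no mass on the axes away from the origin; this is what makes ``every distribution supported on $[0,1/2]^2$ is a best response to $\For^*$'' harmless for the $\Fand$'s actually under consideration, even though that blanket statement is false in general for unfavourable tie-breaking.
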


The proof of the above theorem is quite involved, and most of the
paper is devoted to it. In the following we give a high level view
of the proof.
%
%\begin{enumerate}
%\item
We start with preliminary set-up. In Section \ref{s:interval} we
show that the support of the bids of each player for each item is
(essentially) an interval, with both players having the same upper
bound.
%\item
In Section \ref{or-positive} we observe that  {\tt OR} must get non-zero utility.
%\item
%
In Section \ref{sim} we show that without loss of generality we can
increase the correlation between the bids of {\tt AND} on the two
items without changing its marginal distribution on each item. This
allows us to consider maximally correlated distributions for {\tt
AND}. (We formally define in Section~\ref{sim} what we mean by
maximally correlated.)
%\item
In Section \ref{or-axis} we prove that the bids of {\tt OR} must be
on the axis: i.e. either $(0,x)$ or $(x,0)$; and then in Section
\ref{and-diagonal} we show that {\tt AND} must bid on the diagonal,
i.e., $(x,x)$.
%\item
At this point we derive, in Section \ref{exact} the exact form of
the equilibrium distributions.  Section \ref{thm-pf} puts everything
back together showing how the main theorem is implied.
%\end{enumerate}

\subsection{The Bids on Each Item form an Interval}\label{s:interval}

The results in this section apply to any equilibrium in a market
with two bidders and any number of items. We still state these
results using the names {\tt AND} and {\tt OR} for the players but
we only assume that $(\Fand,\For)$ are an equilibrium of some market
game. We do not  rely in this section on the specific form of the
utilities of {\tt AND} and {\tt OR}, but only on the fact that our
game is a simultaneous first price auction. This implies that
Lemma~\ref{interval} also holds with the roles of {\tt AND} and {\tt
OR} reversed.

\begin{lemma}
\label{lemma:high_bid}
The highest bids of the two players on a
particular item are equal.  I.e. $\Fand(x,H)=1$ if and only if
$\For(x,H)=1$ (and similarly for the other items.)
\end{lemma}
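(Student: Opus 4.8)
The plan is to argue by contradiction, exploiting the fact that a first price auction gives a bidder who is the unique highest bidder on an item the chance to shave her bid slightly and strictly gain. Suppose the two players' supremum bids on item~$1$ differ; by symmetry of the claim in the two players, say $\For$'s supremum bid on item~$1$ is $b_{or} < b_{and}$, where $b_{and}$ is the supremum of $\textsf{AND}$'s bids on item~$1$ (the case $b_{and} < b_{or}$ is handled identically with roles swapped). Formally, let $b_{and} = \sup\{x : \Fand(x,H) < 1\}$ and $b_{or} = \sup\{x : \For(x,H) < 1\}$, and assume $b_{or} < b_{and}$.

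First I would pick a level $c$ with $b_{or} < c < b_{and}$ such that $\textsf{AND}$ places strictly positive probability on bidding more than $c$ on item~$1$; this exists by definition of $b_{and}$ as a supremum (choose $c$ to be, say, the midpoint of $(b_{or}, b_{and})$, and note $\Fand(c,H) < 1$). Condition on the event $E$ that $\textsf{AND}$'s bid $x_1$ on item~$1$ lies in $(c, H]$; this event has positive probability. On $E$, since $\For$ almost surely bids at most $b_{or} < c < x_1$ on item~$1$, player $\textsf{AND}$ wins item~$1$ with probability~$1$ (the tie-breaking rule is irrelevant as there is no tie). Now consider the deviation in which $\textsf{AND}$, whenever her intended bid on item~$1$ falls in $(c,H]$, instead bids $c' := (b_{or}+c)/2$ on item~$1$ and leaves her bid on all other items unchanged. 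Under this deviation $\textsf{AND}$ still wins item~$1$ with probability~$1$ on $E$ (since $c' > b_{or}$ and $\For$ bids at most $b_{or}$ there, a.s.), wins exactly the same set of other items with the same probabilities (the bids on other items are untouched, and $\For$'s tie-breaking on other items depends only on the bids for those items), hence obtains exactly the same valuation in every realization, but pays strictly less on item~$1$ — a saving of at least $c - c' > 0$ on the positive-probability event $E$. This strictly increases $\textsf{AND}$'s expected utility, contradicting that $(\Fand,\For)$ is an equilibrium.

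The main point requiring care is the tie-breaking: after shaving to $c'$ there is no tie on item~$1$ (we keep $c' > b_{or}$ strictly, and $b_{or}$ is itself a strict upper bound up to measure zero, so the probability $\For$ bids exactly in $[c',H]$ on item~$1$ is zero), and on the other items the deviation changes nothing, so by the standing assumption that tie-breaking on an item depends only on that item's bids, $\textsf{AND}$'s allocation on the other items is unaffected. A secondary subtlety is whether $b_{or}$ is attained: if $\For$ has an atom exactly at $b_{or}$ on item~$1$, we simply choose $c' \in (b_{or}, c)$, which still strictly exceeds $b_{or}$, so $\textsf{AND}$ still wins outright; the strict inequality $b_{or} < b_{and}$ leaves room for this. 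Finally, the phrasing ``$\Fand(x,H)=1$ if and only if $\For(x,H)=1$'' is exactly the statement that the two supremum bids coincide, so establishing $b_{and}=b_{or}$ (for every item) completes the proof; the full two-sided statement follows because the contradiction argument is symmetric in the two players.
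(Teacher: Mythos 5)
Your proposal is correct and follows essentially the same line as the paper: the key observation in both is that if one player's top bid on an item strictly exceeds the other's, that player wastes money and can strictly gain by shaving bids down toward the opponent's supremum, contradicting equilibrium. The paper phrases this as a per-bid dominance ($\For(x_0,H)=1$ makes any bid $x>x_0$ of {\tt AND} strictly dominated by $x_0+\epsilon$), while you package it as a bulk deviation on a positive-probability event; these are the same argument up to presentation.
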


\begin{proof}
Assume that $\For(x_0,H)=1$ for some $x_0$. For any bid $(x,y)$ of
{\tt AND}, with $x>x_0$, {\tt AND} gets strictly lower utility than
$(x_0+\epsilon,y)$ (for any $\epsilon<x-x_0$) since it wins in
exactly the same cases $(x,y)$ won, and always pays strictly less.
It follows that no bid with $x>x_0$ is a best-response for {\tt
AND}, thus {\tt AND} always bids at most $x_0$ on the first item,
i.e., $\Fand(x_0,H)=1$. The other direction is analogous. \qed
\end{proof}

Given Lemma \ref{lemma:high_bid} it is possible to define the
``highest bid'' on an item: $h_1=\min\{x\mid \Fand(x,H)=1\}=\min\{x
\mid \For(x,H)=1\}$ and $h_2=\min\{y\mid \Fand(H,y)=1\}=\min\{y\mid
\For(H,y)=1\}$ (where the minimum is actually achieved due to the
right-continuity of CDFs.) We will argue below that in fact
$h_1=h_2$.

\begin{lemma} \label{interval}
Let $0<b<c$ such that {\tt OR} never bids between $b$ and $c$ on an
item, i.e., $\For(b,H)=\For(c,H)$ and such that both players
sometimes bid at most $b$, i.e., $\For(b,H)>0$ and $\Fand(b,H)>0$.
Then both players always bid at most $b$, i.e.
$\For(b,H)=\Fand(b,H)=1$.
\end{lemma}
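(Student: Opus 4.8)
The plan is to first enlarge the gap as far as possible, propagate it to {\tt AND}, and then derive a contradiction from the assumption that the enlarged gap fails to reach the top bid. Set $c^{*}=\sup\{x\ge b\mid \For(x,H)=\For(b,H)\}$, so $c^{*}\ge c$ and {\tt OR} places no mass in the open interval $(b,c^{*})$ on item $1$. I first show, by a domination argument, that {\tt AND} also places no mass in $(b,c^{*})$: if {\tt AND} bids some $x\in(b,c^{*})$ on item $1$, then since {\tt OR} has no mass in $(b,c^{*})$, the event that {\tt AND} wins item $1$ is exactly ``{\tt OR} bids at most $b$ on item $1$'', of probability $\For(b,H)>0$; the same holds if {\tt AND} bids $b+\epsilon$ with $\epsilon<x-b$ instead, but then {\tt AND} pays strictly less whenever it wins item $1$ and nothing changes on item $2$. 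Hence no bid in $(b,c^{*})$ is a best response for {\tt AND}, so $\Fand(x,H)=\Fand(b,H)$ for all $x\in[b,c^{*})$.

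Now suppose for contradiction that the conclusion fails, i.e. $\For(b,H)<1$ or $\Fand(b,H)<1$. Using that both players share the gap $(b,c^{*})$ together with Lemma~\ref{lemma:high_bid} (which forces the common top bid $h_1$), one gets $h_1\ge c^{*}$, and hence that \emph{both} players put positive mass on $[c^{*},h_1]$ on item $1$ while still putting positive mass on $[0,b]$. The contradiction is a deviation for {\tt AND}: pick a bid $z$ of {\tt AND} in its upper cluster with $z$ arbitrarily close to $c^{*}$ (such $z$ exist because $c^{*}$ is the infimum of that cluster), together with the matching bid $y$ on item $2$, and compare $(z,y)$ with $(b+\epsilon,y)$ for the fixed value $\epsilon=(c^{*}-b)/2$. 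Bidding $b+\epsilon$ on item $1$ instead of $z$: (i) still wins item $1$ whenever {\tt OR} bids at most $b$ on item $1$, an event of probability $\For(b,H)>0$, and saves $z-b-\epsilon\ge(c^{*}-b)/2$ on each such win --- a gain bounded below by a positive constant independent of $z$; (ii) the only item-$1$ wins it gives up are against {\tt OR} bids in $[c^{*},z)$ (choosing $z$ outside {\tt OR}'s at most countably many atoms), an event whose probability tends to $0$ as $z\to c^{*}$, with per-unit loss bounded by the largest valuation in the game. For $z$ close enough to $c^{*}$ the gain in (i) dominates the loss in (ii), so $(b+\epsilon,y)$ strictly beats $(z,y)$, contradicting that $(z,y)$ is a best response.

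The step I expect to be the main obstacle is making this last deviation argument fully rigorous, i.e. handling atoms at $c^{*}$. If {\tt OR} has an atom at $c^{*}$ there need not be {\tt AND}-bids strictly above and arbitrarily close to $c^{*}$, and the forgone-wins term (ii) need not vanish; these cases must be treated separately. When {\tt OR} has no atom at $c^{*}$ one can simply run the same comparison with $z$ replaced by $c^{*}$ itself, so that there is no forgone-wins term at all and $(b+\epsilon,y)$ strictly dominates. The case where both players have an atom at the same positive bid $c^{*}$ is ruled out by the standard instability of coinciding atoms in a first-price auction: the tie-loser profits by bumping its bid by $\epsilon$, using that a player with zero marginal value for the item would not have bid a positive amount on it in the first place, so it would have moved its atom to $0$. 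Finally the degenerate sub-case $c^{*}=h_1$, where {\tt OR}'s mass above the gap is a single atom at the top bid (forcing an {\tt AND} atom there too), is argued in the same spirit, separating $h_1<H$ (bump by $\epsilon$) from $h_1=H$. Carrying out this case analysis carefully is the technical heart of the proof; the rest is the short domination and deviation arguments above.
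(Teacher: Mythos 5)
Your proof follows the paper's overall shape---define the right endpoint $c^{*}$ of {\tt OR}'s gap, propagate the gap to {\tt AND}, then derive a contradiction via a profitable deviation---and your propagation step is essentially the paper's step (I). But your contradiction step has a genuine gap. You pick an {\tt AND} bid $z$ ``arbitrarily close to $c^{*}$'' and justify its existence by ``such $z$ exist because $c^{*}$ is the infimum of that cluster.'' That is wrong: by your own definition $c^{*}=\sup\{x\ge b : \For(x,H)=\For(b,H)\}$ is the right endpoint of {\tt OR}'s gap, i.e.\ the infimum of {\tt OR}'s bids above $b$, not of {\tt AND}'s. Having shown that {\tt AND} places no mass in $(b,c^{*})$, you only know that the infimum $d'$ of {\tt AND}'s bids above $b$ satisfies $d'\ge c^{*}$; nothing you have established precludes $d'>c^{*}$, i.e.\ {\tt AND} having an even wider gap and no mass anywhere near $c^{*}$. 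In that case the {\tt AND} deviation you rely on does not exist, and the one concrete fix you offer---take $z=c^{*}$ when {\tt OR} has no atom there---fails for the same reason, since $c^{*}$ need not be in {\tt AND}'s support.

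This is precisely the configuration the paper handles with an {\tt OR} deviation rather than an {\tt AND} one. When {\tt AND} has no atom at the gap's right endpoint $d$, the paper has {\tt OR} move its item-1 bids from a small right-neighborhood $[d,d+\epsilon)$ down to $(b+d)/2$: the forgone wins vanish as $\epsilon\to 0$ (since {\tt AND} has no mass in $(b,d)$ nor at $d$), while {\tt OR} saves at least $(d-b)/2$ whenever {\tt AND} bids at most $b$, which happens with probability $\Fand(b,H)>0$. The remaining cases are a clean split: {\tt AND} atom at $d$ with no {\tt OR} atom is handled by essentially your {\tt AND} deviation, and coinciding atoms by a tie-breaking argument. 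By restricting yourself to {\tt AND} deviations you cannot close the $d'>c^{*}$ case, and you also explicitly defer the atom case analysis (``the technical heart of the proof''), so the argument as written is not a complete proof.
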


\begin{proof}
First, if {\tt OR} never bids above $c$, i.e., $\For(c,H)=1$, then
also $\For(b,H)=\For(c,H)=1$ and by Lemma~\ref{lemma:high_bid} also
$\Fand(b,H)=1$.

Otherwise, for a contradiction, assume {\tt OR} bids above $c$.
Define $d$ to be the infimum bid of {\tt OR} above $b$: $d=\inf \{x
\mid \For(x,H) > \For(b,H)\}$. This implies that for any
$\epsilon>0$ the {\tt OR} bids in $[d,d+\epsilon)$ with positive
probability. We show a sequence of properties that
depends on $d$:\\
%\begin{enumerate}
%\item
%
{\bf (I)} {\tt AND} does not bid in the range $(b,d)$ on that item
(i.e., $\Fand(x,H)=\Fand(b,H)$ for all $b \le x<d$). Assume for a
contradiction that {\tt AND} bids $(x,y)$ for some $b<x<d$. Now
consider a deviation of {\tt AND} that bids $(b+\epsilon,y)$. Both
$(x,y)$ and $(b+\epsilon,y)$  win in exactly in the same cases,
however, with the bid $(b+\epsilon,y)$ {\tt AND} pays strictly less
(for any $\epsilon< x-b$). Since this occurs with positive
probability, we have a contradiction that $(x,y)$ is a best
response of {\tt AND}.\\
%
%\item
{\bf (II)} Assume {\tt AND} player does not have an atom at $d$.
%(i.e. $\Fand(d,H) = \sup \{ \Fand(x,H) : x<d\} = \Fand(b,H)$).
%
Consider a deviation of the {\tt OR} player switching any bid for the
first item which is in the range $(d,d+\epsilon)$ (for a small
enough $\epsilon \ge 0$), with the bid $(b+d)/2$.
The probability that {\tt AND} bids in $(d,d+\epsilon)$ goes to zero as $\epsilon$
goes to zero, since {\tt AND} does not have an atom at $d$.
This upper bounds the loss of {\tt OR} in the deviation.
On the other hand, the payments decrease by at least $(d-b)/2 >0$ and this happens with
positive probability (at least the probability that {\tt AND} bids below $b$).
Therefore we reached a contradiction to the assumption that
{\tt OR} is best responding.\\
%\item
{\bf (III)} Assume {\tt AND} has an atom at $d$ and the {\tt OR}
does not have an atom at $d$.
%Exactly the same argument as before shows a contradiction.
Consider a deviation where the {\tt AND} switches the bid of $d$ for the
first item by the bid $(b+d)/2$. Since the {\tt  OR} does not have an atom at $d$,
the probability that {\tt AND} wins does not change, and the payments go down by
$(d-b)/2$ with constant probability (at least the probability that {\tt OR} bids below $b$.)
Therefore we reached a contradiction to the assumption that
{\tt AND} is best responding.\\
{\bf (IV)} Assume both {\tt AND} and {\tt OR} have an atom at $d$.
Now, look at the tie breaking rule at $d$, it gives {\tt AND } probability $q_d$ of wining the tie at $d$
and {\tt OR} a probability of $1-q_d$ winning.
At least one of the players does not always win.
\ignore{\footnote{Here we are using the fact
that the tie breaking rule does not depend on the bids of the second
item.} }
That player may want to increase its bid to $d+\epsilon$ and always
win the item -- this will be strictly beneficial unless its expected
utility from bidding $d$ is exactly zero.
But in that case reducing its bid to $b+\epsilon$ will strictly
increase its utility: winning whenever he previously did and paying
less (again, winning the item with these bids has positive
probability.)
%\end{enumerate}

We reached a contradiction to the assumption that {\tt OR} bids
above $c$, therefore $\For(b,H)=1$. By Lemma~\ref{lemma:high_bid} we
have that $\Fand(b,H)=1$.
 \qed
\end{proof}

\subsection{OR Gets Positive Utility}\label{or-positive}

\begin{lemma}
\label{lem:or-positive} In any mixed Nash equilibrium $(\Fand,\For)$
the expected utility of the {\tt OR} player is strictly positive.
\end{lemma}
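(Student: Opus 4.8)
The plan is to show that, in any equilibrium $(\Fand,\For)$, {\tt OR} has a deviation with strictly positive expected payoff; since {\tt OR}'s equilibrium utility is at least the value of any deviation, this gives the lemma. First I would record the trivial fact, used twice below, that the equilibrium utility of each player is at least $0$, because bidding $(0,0)$ costs nothing and yields a non-negative value.

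Next I would split on how low {\tt AND} bids. Let $m_i$ be the infimum of the support of {\tt AND}'s bid on item $i$. Suppose first $m_1<v$ (the case $m_2<v$ is symmetric). Fix $t$ with $m_1<t<v$ and consider {\tt OR} deviating to the pure bid $(t,0)$. Whenever {\tt OR} wins item $1$ it gets value $v$ and pays at most $t$ (namely $t$ on item $1$ and nothing on item $2$), hence utility at least $v-t>0$; whenever {\tt OR} does not win item $1$ it pays nothing and gets a non-negative value, hence utility at least $0$. Since {\tt AND} bids strictly below $t$ on item $1$ with positive probability, {\tt OR} wins item $1$ under this deviation with positive probability, so the deviation has strictly positive expected utility, and {\tt OR}'s equilibrium utility is strictly positive.

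The remaining case to rule out is $m_1\ge v$ and $m_2\ge v$, i.e.\ {\tt AND} bidding at least $v$ on each item almost surely; this is the heart of the argument. Here every bid {\tt AND} actually places loses money whenever it wins anything: winning both items costs at least $2v>1$, and winning exactly one item costs at least $v>1/2$. Since {\tt AND} gets exactly $0$ on the event that it wins nothing, and its overall utility is $\ge 0$, {\tt AND} must win nothing almost surely. But then {\tt OR} wins each item almost surely, hence wins both items almost surely; and since {\tt AND} always bids at least $v$ on each item, {\tt OR} must itself bid at least $v$ on each item almost surely (otherwise {\tt AND} would outbid it on that item and win it with positive probability). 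Consequently {\tt OR} wins both items and pays at least $2v>v$ almost surely, making its utility strictly negative and contradicting that it is $\ge 0$. So this case is impossible.

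The one-line deviation arguments dispose of everything except this last case, so the step I expect to need the most care is the implication chain there: {\tt AND}'s utility being $\ge 0$ forces {\tt AND} to win nothing, which forces {\tt OR} to win everything at total price at least $2v$, which contradicts {\tt OR}'s utility being $\ge 0$.
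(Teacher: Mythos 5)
Your proof is correct and rests on the same two ideas as the paper's: either {\tt AND} sometimes bids low enough on some item that {\tt OR} can profitably snipe it, or {\tt AND} always bids so high that it must lose everything, forcing {\tt OR} to win both items at a total price above $v$ and go negative. The only cosmetic difference is organizational --- you split cleanly on the infima $m_1,m_2$ of {\tt AND}'s marginal bids using the threshold $v$, whereas the paper argues by contradiction from $u_{or}=0$, feeds {\tt OR} the specific deviation $(\eta,0)$/$(0,\eta)$ with $\eta=(v+\tfrac12)/2\in(\tfrac12,v)$, and deduces that {\tt AND} must always outbid $\eta$ --- but these are the same argument seen from two angles, and both use $v>\tfrac12$ in the same two places (to make winning unprofitable for {\tt AND} and to make {\tt OR} overpay).
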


\begin{proof}
For contradiction assume that the expected utility of the {\tt OR}
player is zero, i.e., $u_{or}(\Fand,\For)=0$. Consider the following
deviation of the {\tt OR} player. Let $\eta=(0.5+v)/2 >1/2$. The
{\tt OR} player bids with probability $1/2$ the bids $(0,\eta)$ and
with probability $1/2$ the bids $(\eta,0)$. Since we are at an
equilibrium the expected utility of this deviation is $0$. This
could happen only if
 the
{\tt AND} player always bids in $\Fand$ above $\eta$ for both items,
i.e., $\Fand(\eta,H) = \Fand(H,\eta) = 0$.

If the {\tt AND} player  always bids above $\eta$ for both items
then he has a non-negative utility only if he always loses both
items.
It follows that the {\tt OR} player, using $\For$, always wins both
items with a cost of at least $\eta$ for each. However, this implies
that the {\tt OR} player pays at least $v+0.5$. Since the value of
the {\tt OR} player is $v$ she has a negative utility, which is a
contradiction. \qed
\end{proof}

Let $\por(x,y)$ be the probability that the {\tt OR} wins at least
one item with the bid $(x,y)$, i.e.,
$\por(x,y)=\Fand(x,H)+\Fand(H,y)-\Fand(x,y)$. The following simple
corollary to Lemma~\ref{lem:or-positive} would be useful.

\begin{corollary}
\label{cor:or-positive} In any mixed Nash equilibrium
$(\Fand,\For)$, if $\For(x,y)>0$ then $\por(x,y)>0$.
\end{corollary}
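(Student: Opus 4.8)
The plan is to argue by contradiction. Suppose $\For(x,y)>0$ but $\por(x,y)=0$. Since $\Fand(x,y)\le\min\{\Fand(x,H),\Fand(H,y)\}$, we have
$\por(x,y)=\Fand(x,H)+\Fand(H,y)-\Fand(x,y)\ge\max\{\Fand(x,H),\Fand(H,y)\}\ge 0$,
so the hypothesis $\por(x,y)=0$ forces $\Fand(x,H)=\Fand(H,y)=0$. In words, {\tt AND} almost surely bids strictly more than $x$ on item $1$ and strictly more than $y$ on item $2$.

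Now consider any bid $(x',y')$ of {\tt OR} with $x'\le x$ and $y'\le y$. Against {\tt AND}'s strategy, {\tt OR}'s bid on each item is a.s.\ strictly below {\tt AND}'s (in particular there is no tie to worry about), so {\tt OR} wins neither item and pays nothing; her utility from such a bid is exactly $0$. The event that {\tt OR} bids in the box $[0,x]\times[0,y]$ has probability $\For(x,y)>0$ under {\tt OR}'s mixed strategy, so {\tt OR} places positive mass on bids of expected utility $0$. By Lemma~\ref{lem:or-positive}, {\tt OR}'s equilibrium utility $u_{or}$ is strictly positive, and since $(\Fand,\For)$ is an equilibrium, {\tt OR}'s strategy is a best response and hence no pure bid of {\tt OR} yields more than $u_{or}$ against $\Fand$. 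Therefore {\tt OR}'s expected utility, being an average of her pure-bid utilities with positive weight on utility-$0$ bids and no weight above $u_{or}$, is strictly smaller than $u_{or}$ — equivalently, {\tt OR} would strictly gain by moving the mass on $[0,x]\times[0,y]$ to a best-response bid. This contradicts the equilibrium assumption, so $\por(x,y)>0$.

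I expect no substantive obstacle here: the corollary is essentially a one-step consequence of Lemma~\ref{lem:or-positive} together with the best-response property of equilibrium. The only points needing a moment's care are the monotonicity/tie-breaking bookkeeping — that $\Fand(x,H)=0$ really does imply {\tt AND} outbids {\tt OR} on item $1$ whenever {\tt OR} bids at most $x$ — and the standard fact that in a mixed equilibrium every pure deviation of {\tt OR} is worth at most her equilibrium utility; both are immediate.
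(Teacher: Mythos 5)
Your proof is correct and is exactly the natural argument the paper leaves implicit when it calls this a ``simple corollary'' of Lemma~\ref{lem:or-positive}: from $\por(x,y)=0$ one deduces $\Fand(x,H)=\Fand(H,y)=0$, so every {\tt OR} bid in $[0,x]\times[0,y]$ loses both items and yields utility $0$, and placing mass $\For(x,y)>0$ on such bids would pull {\tt OR}'s expected utility strictly below the positive equilibrium utility guaranteed by Lemma~\ref{lem:or-positive}, contradicting best response.
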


%{\bf HK its seems that this corollary is currently not cited. Maybe
%at some places where we cite the lemma we want to cite the corollary
%instead.}

Consider the highest bids $h_1$ and $h_2$. Clearly if $h_1 > h_2$
and $h_2 < v$ then the {\tt OR} can gain by deviating and bidding
$h_2 +\epsilon$ on the second item and $0$ on the first item
 whenever it used to bid $y
> h_2$ on the first item. If $h_1 > h_2=v$  then {\tt OR} cannot
have positive utility contradicting Lemma \ref{lem:or-positive}. So
we have the following corollary.
\begin{corollary} \label{cor:highestbid}
The highest bid of the players on the first item ($h_1$) is equal to
the highest bid of the players on the second item ($h_2$).
\end{corollary}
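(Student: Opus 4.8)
The plan is to use the symmetry between the two items to reduce to the case $h_1\ge h_2$ and then rule out $h_1>h_2$, arguing by contradiction. Recall from Lemma~\ref{lemma:high_bid} that $h_1$ is the common maximal bid of \emph{both} players on the first item, so $\For(h_2,H) < \For(h_1,H) = 1$ and hence {\tt OR} places bids whose first coordinate strictly exceeds $h_2$ with positive probability; likewise {\tt AND} never bids above $h_2$ on the second item, so a bid of $h_2+\epsilon$ there wins the second item with probability one.

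The heart of the argument is the case $h_2 < v$. Here I would exhibit a profitable deviation for {\tt OR}: modify $\For$ by replacing every bid whose first coordinate $y$ exceeds $h_2$ with the bid $(0,\,h_2+\epsilon)$, leaving all other bids unchanged. The new bid yields expected utility exactly $v - h_2 - \epsilon > 0$, since it wins the second item outright at price $h_2+\epsilon$ and nothing else is at stake. To see that this strictly improves {\tt OR}'s overall payoff it suffices to bound the \emph{old} conditional expected utility on the (positive-probability) event that {\tt OR} bid some $y>h_2$ on the first item: {\tt OR}'s value never exceeds $v$, while on this event its payment is at least $y$ times the probability it wins the first item, which is at least $h_2$ times that probability; and by Corollary~\ref{cor:or-positive} the old bid wins something with positive probability. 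Combining these bounds — being careful about the realizations in which {\tt OR} wins \emph{both} items, where it is the cap $v$ on the value rather than the payment that matters — gives an old conditional utility strictly below $v-h_2$, hence a net gain for small enough $\epsilon$, contradicting equilibrium.

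It remains to handle $h_2 \ge v$. Since $h_1 > h_2 \ge v$, {\tt OR} bids some $y > v$ on the first item with positive probability, and any such bid pays strictly more than {\tt OR}'s entire value $v$ whenever it wins that item, while a bid of at most $v$ on the second item wins only against strictly smaller bids of {\tt AND}; an argument in the spirit of Lemma~\ref{lem:or-positive} then shows {\tt OR} cannot secure positive expected utility, contradicting Lemma~\ref{lem:or-positive}. In particular this rules out $h_2 > v$, so only $h_2 = v$ arises here. Either way $h_1 = h_2$.

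I expect the main obstacle to be making the $h_2<v$ deviation rigorous rather than merely plausible: one must exclude the scenario in which {\tt OR}'s bids with $y>h_2$ already extract utility close to $v-h_2$ by \emph{also} winning the second item cheaply, and one must cope with possible atoms of {\tt AND} near $h_1$ on the first item. Routing everything through the value cap $v$ and Corollary~\ref{cor:or-positive}, rather than through an explicit case analysis of the tie-breaking rule, is what I expect to keep the proof clean.
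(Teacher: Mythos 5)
Your proof takes the same route the paper does: reduce by symmetry to $h_1\ge h_2$, for $h_2<v$ reroute {\tt OR}'s bids with first coordinate above $h_2$ to $(0,h_2+\epsilon)$, and for $h_2\ge v$ contradict Lemma~\ref{lem:or-positive} (the paper writes out only $h_2=v$; your coverage of $h_2>v$ is a harmless addition). Both you and the paper are informal in the $h_2<v$ branch, and you are right to flag it as the crux. However, the route you sketch for closing it does not work. Bounding the conditional expectation of {\tt OR}'s old utility over the whole event $\{y_1>h_2\}$ by $v-h_2$ via the payment on item 1 plus the value cap fails, and the problematic realization is not the one you foreground in the middle of the argument (winning both items is fine: there the payment already exceeds $h_2$) but the one you only touch at the end — {\tt OR} \emph{losing} item 1 and winning item 2 cheaply. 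For a bid $(y_1,y_2)$ with $y_1>h_2$ and $y_2$ small, on the event that {\tt AND} outbids $y_1$ on item 1 and loses item 2, {\tt OR} gets utility $v-y_2\approx v$, which can exceed $v-h_2$; neither the value cap nor Corollary~\ref{cor:or-positive} controls the probability or contribution of this event, so there is no reason the tail average should fall below $v-h_2$. What actually makes the deviation profitable is a \emph{pointwise} estimate localized near $h_1$ rather than an average over the tail: since $\For(x,H)<1$ for all $x<h_1$, {\tt OR} places bids with first coordinate arbitrarily close to $h_1$, and for such a bid
\[
u_{or}(y_1,y_2)\ \le\ v\Pr[\text{win 1}] + v\Pr[\text{lose 1}] - y_1\Pr[\text{win 1}]\ =\ v - y_1\Pr[\text{win 1}],
\]
which tends to $v-h_1<v-h_2$ as $y_1\uparrow h_1$ (the case of an {\tt AND} atom at $h_1$, which you also flag, requires a short separate step). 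Such a bid is strictly beaten by $(0,h_2+\epsilon)$ for small $\epsilon$, contradicting that it is a best response. So keep your decomposition and your deviation, but the justification should live at the top of {\tt OR}'s bid range on item $1$, not in an aggregate bound over $\{y_1>h_2\}$.
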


We let $h=h_1=h_2$.

\subsection{Identical Marginal Distributions and Correlation}
\label{sim}

It will be convenient to consider the projection of the joint bid
distribution of a player on the individual coordinates, i.e., the
single items.

%Similarity == Identical Marginal Distributions
\begin{definition}
Two CDF's $F$ and $F'$ are called {\em  Identical Marginal
Distributions} if their marginals are identical. I.e.,
$F(x,H)=F'(x,H)$ for all $x$ and
$F(H,y)=F'(H,y)$ for all $y$.  % Same for $G$ and $G'$.
\end{definition}

The following proposition takes advantage of the fact that the
decision of each auction depends only on the marginal distribution.

\begin{proposition} \label{equiv}
If $\Fand$ and $\Fand'$ are Identical Marginal Distributions then
against any strategy $\For$:
\begin{itemize}
\item Each item is won by each player with the same probability in $(\Fand,\For)$ and $(\Fand',\For)$.
\item The expected payments of each player are the same in $(\Fand,\For)$ and $(\Fand',\For)$.
\end{itemize}
%Similarly if $G$ and $G'$ are Identical Marginal Distributions and playing against $F$.
\end{proposition}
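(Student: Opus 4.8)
The plan is to observe that in a simultaneous first price auction, the outcome of the auction on item $i$ (who wins it and what the winner pays) depends only on the two players' bids on item $i$, and not on their bids on the other item. Since the tie-breaking rule is also assumed to depend only on the bids for the tied item, the entire per-item allocation and payment are functions of the per-item bid pair alone. Therefore, to compute the probability that a given player wins item $i$, or the expected payment on item $i$, when {\tt AND} plays $\Fand$ against {\tt OR} playing $\For$, it suffices to know the marginal distribution of {\tt AND}'s bid on item $i$ together with {\tt OR}'s (marginal) bid distribution on item $i$. Replacing $\Fand$ by $\Fand'$ with the same marginals leaves all of these per-item marginal distributions unchanged, hence leaves every per-item quantity unchanged.

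First I would make this precise for a single item, say item $1$. Fix {\tt OR}'s marginal CDF $G(x) = \For(x,H)$ on item $1$. For a fixed bid $x$ of {\tt AND} on item $1$, the probability that {\tt AND} wins item $1$ is $G(x^-) + q(x)\cdot(G(x)-G(x^-))$, where $q(x)$ is the tie-breaking probability in {\tt AND}'s favor at $x$ (a function of $x$ alone by assumption); the probability {\tt OR} wins item $1$ is the complement; and the expected payment on item $1$ is $x$ times the probability that the bid of $x$ is the winning bid, summed over the two players appropriately (i.e. $x \cdot \Pr[\text{$x$ wins}]$ for the player bidding $x$). All of these are fixed measurable functions of $x$. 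Integrating against {\tt AND}'s marginal bid distribution $\mu_1$ on item $1$ gives the probability {\tt AND} wins item $1$, the probability {\tt OR} wins item $1$, {\tt AND}'s expected payment on item $1$, and {\tt OR}'s expected payment on item $1$ — each as an integral $\int f(x)\, d\mu_1(x)$ for an $f$ depending only on $G$ and the tie-breaking rule. Since $\Fand$ and $\Fand'$ are Identical Marginal Distributions, they induce the same $\mu_1$ (and the same $\mu_2$ on item $2$), so all four quantities agree, and the same argument applies verbatim to item $2$.

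Then I would assemble the claim. The probability that a given player wins item $1$ is identical under $(\Fand,\For)$ and $(\Fand',\For)$ by the item-$1$ computation above, and likewise for item $2$; this is exactly the first bullet. For the second bullet, each player's total expected payment is the sum of its expected payment on item $1$ and its expected payment on item $2$ (expectation is additive regardless of correlation between the two bids), and each summand is unchanged; hence the total expected payments agree. I do not expect a genuine obstacle here — the proposition is essentially a restatement of the separability built into the model — but the one point that must be handled with care is the treatment of atoms and ties: one must write the win probability with the left-limit $G(x^-)$ plus the tie mass weighted by $q(x)$, rather than naively using $G(x)$, and one must use the hypothesis that the tie-breaking rule depends only on the tied item's bids so that $q(x)$ is well defined as a function of $x$ alone. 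Once that bookkeeping is in place, each per-item quantity is visibly an integral of a fixed function against the (common) marginal, and the result follows.
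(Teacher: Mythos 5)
Your proof is correct and matches the reasoning the paper implicitly relies on (the paper states this proposition without a written proof, noting only that it ``takes advantage of the fact that the decision of each auction depends only on the marginal distribution''). Your careful treatment of atoms via $G(x^-)$ and the tie-breaking probability $q(x)$, together with linearity of expectation across the two items, is exactly the right way to make that one-line remark rigorous.
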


Note that the above proposition states that the probabilities of
winning any single item by any player are identical in
$(\Fand,\For)$ and $(\Fand',\For)$, but the probability of winning
both items might differ.

\begin{proposition} \label{better}
If $\Fand$ and $\Fand'$ are Identical Marginal Distributions then,
for every $\For$, the following conditions are equivalent:
\begin{enumerate}
\item
$u_{and}(\Fand,\For) \le u_{and}(\Fand',\For)$.
\item
$u_{or}(\Fand,\For) \ge u_{or}(\Fand',\For)$.
\item
$Pr_{(\Fand,\For)}[\textrm{{\tt AND} wins  both items}] \le
Pr_{(\Fand',\For)}[\textrm{{\tt AND} wins  both items}]$.
\item
$Pr_{(\Fand,\For)}[\textrm{{\tt OR} wins an item}] \ge Pr_{(\Fand',\For)}[\textrm{{\tt OR} wins an item}]$
\item
$Pr_{(\Fand,\For)}[\textrm{{\tt AND} wins  no items}] \le
Pr_{(\Fand',\For)}[\textrm{{\tt AND} wins  no items}]$.
\end{enumerate}
\end{proposition}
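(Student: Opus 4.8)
The plan is to prove Proposition~\ref{better} by establishing a chain of identities that reduce all five conditions to a single underlying quantity. Write $p_{00} = \Pr[\textrm{{\tt AND} wins no items}]$, $p_{01}+p_{10} = \Pr[\textrm{{\tt AND} wins exactly one item}]$, and $p_{11} = \Pr[\textrm{{\tt AND} wins both items}]$, all taken with respect to $(\Fand,\For)$; write $p'_{00},\dots,p'_{11}$ for the corresponding quantities under $(\Fand',\For)$. Since $\Fand$ and $\Fand'$ are Identical Marginal Distributions, Proposition~\ref{equiv} tells us that $\Pr[\textrm{{\tt AND} wins item }1] = p_{10}+p_{11}$ is the same under both joint distributions, and likewise $\Pr[\textrm{{\tt AND} wins item }2] = p_{01}+p_{11}$ is the same. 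Call these common values $\alpha_1$ and $\alpha_2$. From $p_{00}+p_{01}+p_{10}+p_{11}=1$ and $(p_{10}+p_{11})+(p_{01}+p_{11}) = \alpha_1+\alpha_2$ we get the key linear relation $p_{00} = 1-\alpha_1-\alpha_2+p_{11}$, and identically $p'_{00}=1-\alpha_1-\alpha_2+p'_{11}$. Hence $p_{00}\le p'_{00} \iff p_{11}\le p'_{11}$, which is the equivalence of (3) and (5). Condition (4) is $\Pr[\textrm{{\tt OR} wins an item}] = 1 - \Pr[\textrm{{\tt AND} wins both}] = 1-p_{11}$ (the {\tt OR} wins an item exactly when {\tt AND} fails to win both), so (4) $\iff$ $p_{11}\le p'_{11}$ reversed, giving the equivalence of (3) and (4).

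**Linking utilities to allocations.**

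Next I would bring in the utilities. For the {\tt AND} player, $u_{and}(\Fand,\For) = \Pr[\textrm{{\tt AND} wins both}] - \mathbb{E}[\textrm{{\tt AND}'s payment}] = p_{11} - P_{and}$, where $P_{and}$ is the expected payment; by the second bullet of Proposition~\ref{equiv}, $P_{and}$ is the same under $(\Fand,\For)$ and $(\Fand',\For)$. Therefore $u_{and}(\Fand,\For)\le u_{and}(\Fand',\For) \iff p_{11}\le p'_{11}$, establishing the equivalence of (1) with (3). For the {\tt OR} player, $u_{or}(\Fand,\For) = v\cdot\Pr[\textrm{{\tt OR} wins some item}] - \mathbb{E}[\textrm{{\tt OR}'s payment}]$; again the expected payment of {\tt OR} is invariant by Proposition~\ref{equiv} (since {\tt OR}'s payment on each item depends only on who wins that item, which depends only on the marginals of {\tt AND}), and $\Pr[\textrm{{\tt OR} wins some item}] = 1-p_{11}$. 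So $u_{or}(\Fand,\For) = v(1-p_{11}) - P_{or}$ with $P_{or}$ fixed, hence $u_{or}(\Fand,\For)\ge u_{or}(\Fand',\For) \iff p_{11}\le p'_{11}$, which is the equivalence of (2) with (3). Chaining everything, all five statements are equivalent to the single inequality $p_{11}\le p'_{11}$.

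**The main obstacle.**

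The genuinely delicate point is the invariance of the expected payments, and in particular the {\tt OR} player's payment, under the replacement of $\Fand$ by $\Fand'$. This is exactly what Proposition~\ref{equiv} asserts, so strictly I may just cite it; but I should be careful that the subtlety flagged in the remark after Proposition~\ref{equiv}---that $\Pr[\textrm{{\tt AND} wins both}]$ is \emph{not} determined by the marginals---does not leak into the payment computation. It does not, because a player's expected payment on item $i$ is $\sum_i \mathbb{E}[\,\text{bid}_i \cdot \mathbf{1}[\text{that player wins }i]\,]$, which decomposes item by item and therefore depends only on the per-item win events, hence only on the marginal of $\Fand$ on item $i$ (given $\For$ and the tie-breaking rule). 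The only quantity that is sensitive to the correlation structure of $\Fand$ is the joint event $\{\textrm{{\tt AND} wins both}\}$, and that is precisely the quantity $p_{11}$ through which every condition is being compared---so the comparison is well-posed. I would write the argument so that Proposition~\ref{equiv} is invoked for both the allocation probabilities and the payments, and the only ``new'' content here is the bookkeeping identity $p_{00}=1-\alpha_1-\alpha_2+p_{11}$ together with the observation $\Pr[\textrm{{\tt OR} wins some item}]=1-p_{11}$.
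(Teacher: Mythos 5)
Your proof is correct and follows essentially the same approach as the paper: invoke Proposition~\ref{equiv} to fix the expected payments and per-item winning probabilities, use the inclusion--exclusion identity $p_{00}=1-\alpha_1-\alpha_2+p_{11}$ for (3)$\iff$(5), and use the complementarity of ``{\tt AND} wins both'' and ``{\tt OR} wins some item'' for (3)$\iff$(4). The only cosmetic difference is that you reduce all five conditions directly to the single inequality $p_{11}\le p'_{11}$ rather than establishing the pairwise equivalences in the paper's order.
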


\begin{proof}
Since $\Fand$ and $\Fand'$ are Identical Marginal Distributions  the
expected payments are identical for both players.
This implies that the only parameter that influences the utility is
the probability of winning. (For the {\tt AND} player, winning both
items, for the {\tt OR} player, winning one of the two items.) Thus
(1) is equivalent to (3) and (2) is equivalent to (4). But notice
that  always  exactly one of {\tt AND} or {\tt OR} wins, and
therefore if the probability that {\tt AND} wins increases, then the
probability that {\tt OR} wins decreases. It follows that (3) and
(4) are equivalent. Finally, since $\Fand$ and $\Fand'$ are
Identical Marginal Distributions the players win each item with the
same probability. Let $p_1$ be the probability that {\tt AND} wins
item 1 and $p_2$ be the probability that {\tt AND} wins item 2 (both
with $\Fand$ and with $\Fand'$!).  Then we have $Pr[{\tt
AND}\:wins\:no\:item] = 1 - p_1 - p_2 + Pr[{\tt
AND}\:wins\:both\:items]$ and thus (3) and (5) are equivalent.
\qed\end{proof}

An immediate corollary is that if the utilities of one player are identical,
then the utilities of the other player are also identical.

\begin{corollary}
\label{cor:eq} Assume that $\Fand$ and $\Fand'$ are Identical
Marginal Distributions. Then $u_{and}(\Fand,\For) =
u_{and}(\Fand',\For)$ iff $u_{or}(\Fand,\For) = u_{or}(\Fand',\For)$
\end{corollary}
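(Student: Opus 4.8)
The plan is to derive this directly from Proposition~\ref{better}, using only the equivalence of items (1) and (2) there together with a symmetry argument. First I would note that Proposition~\ref{better}, applied to the ordered pair $(\Fand,\Fand')$, gives that $u_{and}(\Fand,\For) \le u_{and}(\Fand',\For)$ holds if and only if $u_{or}(\Fand,\For) \ge u_{or}(\Fand',\For)$. Since the hypothesis that $\Fand$ and $\Fand'$ are Identical Marginal Distributions is symmetric in the two CDFs, and the conclusion of the proposition quantifies over every $\For$, I would apply the same proposition to the reversed pair $(\Fand',\Fand)$, obtaining that $u_{and}(\Fand',\For) \le u_{and}(\Fand,\For)$ if and only if $u_{or}(\Fand',\For) \ge u_{or}(\Fand,\For)$, i.e. $u_{and}(\Fand,\For) \ge u_{and}(\Fand',\For)$ if and only if $u_{or}(\Fand,\For) \le u_{or}(\Fand',\For)$.

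Then I would simply conjoin the two biconditionals. The equality $u_{and}(\Fand,\For) = u_{and}(\Fand',\For)$ is exactly the conjunction of $u_{and}(\Fand,\For) \le u_{and}(\Fand',\For)$ and $u_{and}(\Fand,\For) \ge u_{and}(\Fand',\For)$; by the two equivalences established above, this conjunction is equivalent to the conjunction of $u_{or}(\Fand,\For) \ge u_{or}(\Fand',\For)$ and $u_{or}(\Fand,\For) \le u_{or}(\Fand',\For)$, which is precisely $u_{or}(\Fand,\For) = u_{or}(\Fand',\For)$. This gives both directions of the ``iff'' at once.

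There is essentially no obstacle here; the only point worth checking is that Proposition~\ref{better} really is symmetric in its hypothesis, which it is, since ``Identical Marginal Distributions'' is a symmetric relation between $\Fand$ and $\Fand'$. Alternatively one could route the argument through the equivalence of (1) and (3) (or of (1) and (4)) in Proposition~\ref{better}, arguing via the probability that {\tt AND} wins both items, but passing through (1)$\Leftrightarrow$(2) is the cleanest.
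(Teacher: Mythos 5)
Your argument is correct and matches the paper's intent: the paper simply labels this an immediate corollary of Proposition~\ref{better}, and the symmetry-then-conjoin argument you spell out (applying the (1)$\Leftrightarrow$(2) equivalence to both orderings of the pair and intersecting the two biconditionals) is exactly the right way to make that immediacy precise. No gaps.
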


A very important building block in our proof is the notion of {\em
maximally correlated}. Intuitively, if the support of a distribution
is on a monotone increasing line, then this distribution is
maximally correlated. Since we want to show that the {\tt AND}
player support is essentially the diagonal, this would be very
useful to characterize its bid distribution.

\begin{definition}
For a CDF $\Fand$ %and $G$,
define $\bFand(x,y)=\min(\Fand(x,H),\Fand(H,y))$.
%and $\bFor$ is
%defined by $\bFor(x,y)=\max(0,\For(x,H)+\For(H,y)-1)$.
$\Fand$ is called {\em maximally correlated} if $\Fand=\bFand$.
%and $\For$ is called
%maximally-anti-correlated if $\For=\bFor$.
\end{definition}

Note that $\bFand(x,H)=\min(\Fand(x,H),\Fand(H,H)) =
\min(\Fand(x,H),1)=\Fand(x,H)$, and therefore the following
proposition holds.

\begin{proposition} \label{bf-sim}
Every CDF $\Fand$ %and $\For$,
%we have that $\bFand$ is a CDF
is Identical Marginal Distribution to $\bFand$.
% and $\bFor$ is a CDF similar to $\For$.
\end{proposition}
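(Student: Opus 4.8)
The plan is to verify the two required marginal identities directly from the definition of $\bFand$, using only the fact that $H$ is the maximal admissible bid. Since every bid of {\tt AND} lies in $[0,H]^2$, the event that both coordinates are at most $H$ is certain, so $\Fand(H,H)=1$; of course also $\Fand(x,H)\le 1$ and $\Fand(H,y)\le 1$ for all $x,y$, being values of a CDF.

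First I would compute the marginal of $\bFand$ on the first item: by definition $\bFand(x,H)=\min(\Fand(x,H),\Fand(H,H))=\min(\Fand(x,H),1)=\Fand(x,H)$, where the last step uses $\Fand(x,H)\le 1$. Symmetrically, $\bFand(H,y)=\min(\Fand(H,H),\Fand(H,y))=\min(1,\Fand(H,y))=\Fand(H,y)$. Hence $\bFand$ and $\Fand$ induce the same one-dimensional CDF on each of the two items, which is exactly the definition of being Identical Marginal Distributions. This is the entire content of the proposition; it is a one-line computation with no real obstacle, and indeed the displayed remark preceding the proposition already records one of the two cases.

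The only auxiliary point worth a sentence --- and one not strictly needed for the marginal identity --- is that $\bFand$ is itself a legitimate joint CDF, so that it may be used later as an alternative bid distribution for {\tt AND}. This is standard: $\bFand(x,y)=\min(\Fand(x,H),\Fand(H,y))$ is the comonotone coupling of the two marginals, i.e.\ the pointwise minimum of two one-dimensional CDFs. It is nondecreasing and right-continuous in each coordinate because $\Fand(\cdot,H)$ and $\Fand(H,\cdot)$ are, it tends to $0$ as either coordinate decreases to $0$ and equals $1$ at $(H,H)$, and it satisfies the rectangle (2-increasing) inequality since the minimum of two marginal CDFs always does. Thus $\bFand$ is a CDF automatically, and in any case the marginal comparison above is all that Proposition~\ref{bf-sim} asserts.
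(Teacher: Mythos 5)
Your computation is exactly the paper's: the remark immediately preceding the proposition records $\bFand(x,H)=\min(\Fand(x,H),\Fand(H,H))=\min(\Fand(x,H),1)=\Fand(x,H)$, and the symmetric identity for the second coordinate is implicit. Your added observation that $\bFand$ is itself a valid joint CDF (the comonotone coupling) is correct and harmless, though as you note it is not part of what the proposition asserts.
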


The following proposition claims that $\bFand$ stochastically
dominates $\Fand$.

\begin{proposition}
\label{prop:stoch-dominance}
For every $(x,y)$, $\Fand(x,y) \le \bFand(x,y)$. I.e. $\Fand$
stochastically dominates $\bFand$.
%For every $(a,b)$, $\For(a,b) \ge
%\bFor(a,b)$.  I.e, $\bFor$ stochastically dominates $\For$.
\end{proposition}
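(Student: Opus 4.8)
The plan is to prove the pointwise inequality $\Fand(x,y) \le \bFand(x,y) = \min(\Fand(x,H),\Fand(H,y))$ directly from the fact that a bivariate CDF is monotone nondecreasing in each coordinate. First I would observe that the event $\{X \le x, Y \le y\}$ (using $(X,Y)$ for a random bid drawn according to $\Fand$) is contained in the event $\{X \le x\}$ and also in the event $\{Y \le y\}$, simply because intersecting with an extra constraint only shrinks a set. Translating these two containments into probabilities gives $\Fand(x,y) = \Pr[X \le x, Y \le y] \le \Pr[X \le x] = \Fand(x,H)$ and, symmetrically, $\Fand(x,y) \le \Fand(H,y)$, where I am using that $H$ is the maximum allowed bid so $\{X \le H\}$ and $\{Y \le H\}$ are sure events. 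Taking the minimum of the two upper bounds yields $\Fand(x,y) \le \min(\Fand(x,H),\Fand(H,y)) = \bFand(x,y)$, which is exactly the claim.

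For the interpretation as stochastic dominance, I would then note that $\bFand$, by Proposition~\ref{bf-sim}, has the same marginals as $\Fand$ on each item, and that the pointwise inequality $\Fand \le \bFand$ on the whole square is precisely the statement that the distribution with CDF $\bFand$ first-order stochastically dominates the distribution with CDF $\Fand$ in the usual bivariate (coordinatewise) sense — a smaller CDF value everywhere means more mass pushed toward larger bids. (One should be a little careful about the direction of the phrase "stochastically dominates"; the excerpt's own wording says "$\Fand$ stochastically dominates $\bFand$", so I would simply match that convention in the write-up even though the natural reading is the reverse.)

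I do not expect any serious obstacle here: the entire argument is one line of set inclusion plus monotonicity of probability measures, with no appeal to equilibrium, to the specific {\tt AND}/{\tt OR} utilities, or even to the auction structure. The only points that warrant an explicit sentence are (i) that $H$ is an upper bound on all bids, so conditioning a coordinate to be $\le H$ is vacuous, and (ii) that one should take the minimum of the two bounds rather than just one of them. This is really the coupling/Fréchet–Hoeffding upper-bound fact specialized to a CDF and its own marginals, and the proof I would write is at most three lines.
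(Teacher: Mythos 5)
Your proof is correct, and it is exactly the argument the paper has in mind: the paper states Proposition~\ref{prop:stoch-dominance} without any proof, treating it as immediate, and the one-line set-inclusion argument ($\{X\le x, Y\le y\}\subseteq\{X\le x\}$ and $\subseteq\{Y\le y\}$, then take the minimum of the two bounds) is the obvious way to fill that in. You also correctly note that it is the Fr\'echet--Hoeffding upper-bound fact for a CDF against its own marginals, and that no auction or equilibrium structure is used.

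One small terminological point: your parenthetical says ``the natural reading is the reverse,'' but the paper's phrasing is in fact the standard one. In the usual convention a CDF $F$ (first-order) stochastically dominates $G$ precisely when $F\le G$ pointwise, since a smaller CDF means more mass pushed toward larger values; so ``$\Fand$ stochastically dominates $\bFand$'' is exactly what $\Fand(x,y)\le\bFand(x,y)$ says (in the lower-orthant sense for bivariate distributions). Your own sentence ``a smaller CDF value everywhere means more mass pushed toward larger bids'' already gives the right intuition — it just points the opposite way from the conclusion you drew from it. This does not affect the mathematics; the inequality you prove is the one the paper needs (it is used in Lemma~\ref{weak-dom} to show $\bFand$ weakly dominates $\Fand$ as a strategy).
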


The following lemma shows that a maximally correlated strategy
$\bFand$ weakly dominates the original strategy $\Fand$ .

\begin{lemma} \label{weak-dom}
Every $\Fand$ is weakly dominated, as a strategy of {\tt AND}, by
$\bFand$.
%Every $\For$ is weakly dominated, as a strategy of OR, by $\bFor$.
\end{lemma}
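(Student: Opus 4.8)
The plan is to prove Lemma~\ref{weak-dom} by combining Proposition~\ref{better} (applied with $\Fand' = \bFand$) with the stochastic dominance of Proposition~\ref{prop:stoch-dominance}. Recall that ``weakly dominated'' means: for \emph{every} strategy $\For$ of {\tt OR} we have $u_{and}(\Fand,\For) \le u_{and}(\bFand,\For)$. By Proposition~\ref{bf-sim}, $\Fand$ and $\bFand$ are Identical Marginal Distributions, so Proposition~\ref{better} applies; in particular condition~(1) is equivalent to condition~(5), namely
\[
\Pr_{(\Fand,\For)}[\textrm{{\tt AND} wins no items}] \le \Pr_{(\bFand,\For)}[\textrm{{\tt AND} wins no items}].
\]
So it suffices to prove this last inequality for every $\For$. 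Intuitively this should hold because making {\tt AND}'s two bids maximally correlated concentrates its ``winning mass'' so that the events ``{\tt AND} wins item $1$'' and ``{\tt AND} wins item $2$'' overlap as much as possible, which (by inclusion--exclusion, the single-item win probabilities being fixed) maximizes the probability of winning \emph{both} and hence also maximizes the probability of winning \emph{neither}.

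To make this precise I would fix $\For$ and condition on {\tt OR}'s realized bid $(a,b)$; against a fixed pure bid, {\tt AND} wins item $1$ iff its bid $x$ on item~$1$ is high enough (i.e. $x > a$, with a tie-break correction at $x=a$) and wins item~$2$ iff $y$ is high enough, and crucially these two thresholds are \emph{independent of each other and of {\tt AND}'s bid on the other item} — this is exactly the assumption that the tie-breaking rule for an item depends only on the bids for that item. So conditioned on $(a,b)$, the event ``{\tt AND} wins no item'' is $\{x \le a'\} \cap \{y \le b'\}$ for appropriate thresholds $a',b'$ (up to the measure-zero/tie-break subtlety, which I would handle by noting that $\Fand$ and $\bFand$ have the same marginals, so the tie probabilities on each item individually agree, and the tie-break contributes a correction that depends only on those marginals). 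Writing this conditional probability of winning no item as a function of {\tt AND}'s joint bid CDF $G$ evaluated essentially at $G(a',b')$, I get
\[
\Pr_{(\Fand,\For)}[\textrm{{\tt AND} wins no item}\mid (a,b)] \ \approx\ \Fand(a',b') \ \le\ \bFand(a',b') \ \approx\ \Pr_{(\bFand,\For)}[\textrm{{\tt AND} wins no item}\mid (a,b)],
\]
where the middle inequality is Proposition~\ref{prop:stoch-dominance}. Integrating over {\tt OR}'s bid distribution $\For$ then gives the unconditional inequality, and Proposition~\ref{better} converts it back to $u_{and}(\Fand,\For) \le u_{and}(\bFand,\For)$, which is weak domination.

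The main obstacle I expect is the careful bookkeeping at ties: ``{\tt AND} wins no item'' is not literally $\Fand(a,b)$ but involves the randomized tie-break when $x = a$ or $y = b$, and I want to assert that this correction term is the same for $\Fand$ and $\bFand$. The clean way around it is to observe that the tie-break correction on item~$i$ is a function only of the probability mass {\tt AND} places on the exact bid $a$ (resp.\ $b$) on item~$i$ \emph{together with} the joint probability of the bid $(a,b)$, and then to run the whole argument at points of continuity and take right-limits, or to note that it is enough to establish the utility inequality for $\For = \For^*$-type strategies that have no atoms, after which a continuity/limiting argument extends it to all $\For$; alternatively one can reduce to the case where {\tt OR} breaks ties in {\tt AND}'s favor, which only decreases $\bFand$'s advantage. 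A secondary, purely notational point is that Proposition~\ref{prop:stoch-dominance} as stated has the direction of ``stochastically dominates'' written the reverse of the CDF inequality $\Fand(x,y) \le \bFand(x,y)$; I would just use the CDF inequality directly, which is exactly what the conditional computation needs. Once the tie subtlety is dispatched, the rest is a one-line application of the two quoted propositions.
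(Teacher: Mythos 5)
Your argument is essentially the paper's: fix a pure {\tt OR} bid $(a,b)$ (which suffices by linearity of $u_{and}$ in $\For$), reduce via Proposition~\ref{better} to a single win/lose probability comparison, and close with the CDF inequality $\Fand \le \bFand$ of Proposition~\ref{prop:stoch-dominance}. You route through condition~(5) while the paper writes condition~(3); with identical marginals these differ by a constant, so both work, and your choice is in fact the more natural one, since the quantity literally equal to $\Fand(a,b)$ (tie-breaking aside) is $\Pr[\mbox{{\tt AND} wins no item}]$, not $\Pr[\mbox{{\tt AND} wins both}]$ as the paper's displayed identity says. (Your side remark about the direction of ``stochastic dominance'' is not a real issue: the paper uses the standard convention that a pointwise smaller CDF dominates, which is consistent with $\Fand\le\bFand$.) Your concern about ties is legitimate and is in fact glossed over by the paper's one-line proof as well: $\Pr[\mbox{{\tt AND} loses both}\mid(a,b)]$ equals $\Fand(a,b)$ only up to corrections involving \emph{joint} atom masses such as $\Pr[X=a,\,Y\le b]$, which are not determined by the marginals alone. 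You sketch several candidate workarounds without committing to one; a clean resolution is to write the conditional losing probability as $E[f_1(X)f_2(Y)]$, where $f_i(z)$ is the deterministic nonincreasing probability that {\tt AND} loses item~$i$ when bidding $z$ against {\tt OR}'s fixed bid on that item (tie-break included), and then invoke the Chebyshev/Hardy--Littlewood rearrangement inequality: for monotone $f_1,f_2$ of the same sense and fixed marginals of $(X,Y)$, the comonotone coupling $\bFand$ maximizes $E[f_1(X)f_2(Y)]$, which gives the needed inequality with ties properly handled.
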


\begin{proof} \label{dom}
Fix a pure bid $(x,y)$ of {\tt OR} and let $\For$ bid $(x,y)$ with
probability 1.  By Proposition \ref{better} we have that
$u_{and}(\Fand,\For) \le u_{and}(\bFand,\For)$ if and only if
$\Fand(x,y)=Pr_{(\Fand,\For)}[\mbox{{\tt AND} wins both items}] \le
Pr_{(\bFand,\For)}[\mbox{{\tt AND} wins both items}]=\bFand(x,y)$.
The latter holds, since by Proposition~\ref{prop:stoch-dominance} we
have $\Fand(x,y)\le \bFand(x,y)$.
%since $\Fand$ stochastically dominates $\bFand$ and
%thus the probability of bidding at most $x$ on the first item and at
%most $y$ on the second is smaller(or equal) in $\Fand$.
\qed\end{proof}

The following lemma shows that if we have an equilibrium
$(\Fand,\For)$, and we  replace $\Fand$ by $\bFand$, then we still
remain in an equilibrium. The main part of the proof is showing that
the {\tt OR} strategy remains a best response to $\bFand$.

\begin{lemma} \label{eqtoeq}
If $(\Fand,\For)$ is a Nash equilibrium then $(\bFand,\For)$
%and $(\Fand, \bFor)$ are each
is a Nash equilibrium.  Moreover it produces exactly the same
distribution on allocations, payments, and  utilities as does
$(\Fand,\For)$.
\end{lemma}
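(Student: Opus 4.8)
The plan is to show two things: first, that $\For$ remains a best response to $\bFand$; and second, that $\bFand$ remains a best response to $\For$. The second part is the easy direction. By Proposition~\ref{bf-sim}, $\Fand$ and $\bFand$ are Identical Marginal Distributions, so by Proposition~\ref{equiv} the expected payments of {\tt AND} are the same against $\For$ whether it plays $\Fand$ or $\bFand$; and by Lemma~\ref{weak-dom}, $\bFand$ weakly dominates $\Fand$, so $u_{and}(\bFand,\For) \ge u_{and}(\Fand,\For)$. Since $\Fand$ was a best response to $\For$ (as $(\Fand,\For)$ is an equilibrium), any strategy that does at least as well — in particular $\bFand$ — is also a best response. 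A symmetric observation via Proposition~\ref{better} or Corollary~\ref{cor:eq} will then give that in fact $u_{and}(\bFand,\For)=u_{and}(\Fand,\For)$, so none of the relevant quantities change.

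The main obstacle is showing that $\For$ is still a best response against $\bFand$, i.e. that {\tt OR} cannot profitably deviate. The key point is to compare, for an arbitrary pure bid $(x,y)$ of {\tt OR}, its utility against $\bFand$ versus against $\Fand$. {\tt OR}'s payment for a fixed pure bid $(x,y)$ does not depend on {\tt AND}'s strategy at all, so the only thing that can change is {\tt OR}'s probability of winning an item, which is $\por(x,y) = \Fand(x,H)+\Fand(H,y)-\Fand(x,y)$ against $\Fand$ and $\widehat{p}_{or}(x,y) = \bFand(x,H)+\bFand(H,y)-\bFand(x,y)$ against $\bFand$. Since $\bFand$ and $\Fand$ have identical marginals, the first two terms are unchanged, and by Proposition~\ref{prop:stoch-dominance} we have $\bFand(x,y)\ge \Fand(x,y)$, hence $\widehat{p}_{or}(x,y) \le \por(x,y)$ for every $(x,y)$. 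Thus {\tt OR}'s winning probability (weakly) decreases pointwise when {\tt AND} switches to $\bFand$, so {\tt OR}'s utility from \emph{every} pure bid is (weakly) lower against $\bFand$ than against $\Fand$. In particular, $u_{or}(F',\bFand) \le u_{or}(F',\Fand)$ for every {\tt OR} strategy $F'$, and also $u_{or}(\For,\bFand) \le u_{or}(\For,\Fand)$.

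To finish, I combine this with the already-established fact that $u_{and}(\bFand,\For) \ge u_{and}(\Fand,\For)$: by Proposition~\ref{better} (items (1) and (2)), the latter is equivalent to $u_{or}(\bFand,\For) \le u_{or}(\Fand,\For)$ — but I want the reverse-read of this: since $\Fand,\bFand$ are Identical Marginal Distributions, decreasing {\tt OR}'s win probability increases {\tt AND}'s, consistent with the weak domination. Now the clean argument is: for any deviation $F'$ of {\tt OR}, $u_{or}(F',\bFand) \le u_{or}(F',\Fand) \le u_{or}(\For,\Fand)$, where the second inequality is because $\For$ was a best response to $\Fand$. It remains to check that $u_{or}(\For,\bFand) = u_{or}(\For,\Fand)$, so that the chain certifies $\For$ as a best response to $\bFand$ with no loss. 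This equality follows because equilibrium forces $u_{or}(\bFand,\For)=u_{or}(\Fand,\For)$: indeed $u_{and}(\bFand,\For)\ge u_{and}(\Fand,\For)$ from weak domination, while $u_{and}(\bFand,\For)\le u_{and}(\Fand,\For)$ would follow if $u_{or}(\For,\bFand) < u_{or}(\For,\Fand)$ strictly, contradicting... — here I would instead argue directly: were $u_{or}(\For,\bFand) < u_{or}(\For,\Fand) = u_{or}(\For^*,\text{opt})$, then by Proposition~\ref{better} the probability {\tt AND} wins both items is strictly larger under $\bFand$, so $u_{and}(\bFand,\For) > u_{and}(\Fand,\For)$; but then in $(\Fand,\For)$ the {\tt AND} player had a profitable deviation to $\bFand$, contradicting that $(\Fand,\For)$ is an equilibrium. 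Hence equality holds throughout, $\For$ is a best response to $\bFand$, and by Proposition~\ref{equiv} together with Corollary~\ref{cor:eq} the allocation probabilities, payments, and utilities are all identical to those of $(\Fand,\For)$.
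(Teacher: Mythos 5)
Your proof is correct and follows essentially the same route as the paper's: establish that $\bFand$ still best-responds to $\For$ via weak domination and the equilibrium condition; establish that $u_{or}(F',\bFand)\le u_{or}(F',\Fand)$ for every {\tt OR} deviation $F'$ (the paper gets this by citing Lemma~\ref{weak-dom} together with Proposition~\ref{better}, whereas you re-derive it from Proposition~\ref{prop:stoch-dominance} and identical marginals, which is the same content); and then chain the inequalities, using the equilibrium condition again to get the final equality $u_{or}(\For,\bFand)=u_{or}(\For,\Fand)$.

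One small inaccuracy worth fixing: you assert that ``{\tt OR}'s payment for a fixed pure bid $(x,y)$ does not depend on {\tt AND}'s strategy at all.'' That is not literally true in a first-price auction, since {\tt OR} pays its bid on an item only when it wins that item, so the expected payment does depend on {\tt AND}'s bid distribution. The correct statement (and the one you actually need) is that the expected payment depends only on {\tt AND}'s \emph{marginal} distributions -- which is exactly Proposition~\ref{equiv} -- and since $\Fand$ and $\bFand$ have identical marginals, the payments are unchanged. With that substitution the argument is sound and matches the paper's.
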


\begin{proof}
By Lemma~\ref{weak-dom}, $\bFand$ dominates $\Fand$ so
$u_{and}(\Fand,\For)\leq u_{and}(\bFand,\For)$. Since $(\Fand,\For)$
is an equilibrium, then $u_{and}(\Fand,\For) = u_{and}(\bFand,\For)$
and by Corollary~\ref{cor:eq} we also have that $u_{or}(\Fand,\For)
= u_{or}(\bFand,\For)$. Since $\For$ is best response to $\Fand$ we
have that $u_{or}(\Fand,\For) \geq u_{or}(\Fand,\For')$, for any
$\For'$.

Since by Lemma~\ref{weak-dom} we also have that
$u_{and}(\bFand,\For') \geq u_{and}(\Fand,\For')$ then by
Proposition~\ref{better}, $u_{or}(\bFand,\For') \leq
u_{or}(\Fand,\For')$. Thus,
\[
u_{or}(\bFand,\For') \leq u_{or}(\Fand,\For') \leq
u_{or}(\Fand,\For) = u_{or}(\bFand,\For),
\]
which implies that $\For$ is a best response to $\bFand$.
From Proposition~\ref{better} we get that the allocations, payments, and  utilities are identical.
\qed\end{proof}

We will continue the analysis assuming that $\Fand=\bFand$ is
already maximally correlated and will derive the form of the
equilibrium under this assumption. By Lemma \ref{eqtoeq} this
characterization will then apply to other distributions that are
Identical Marginal Distributions to $\Fand$, and form an equilibrium
with $\For$. We will keep the notation $\bFand$ to stress that it is
maximally correlated.

%It follows that if we are only interested in equilibria in
%un-dominated strategies then {\tt AND} plays a maximally correlated
%$\bFand$
% and OR plays a maximally anti-correlated $\bFor$.
%Even if we are interested in all equilibria, their outcome will be
%equivalent to that obtained by AND playing a maximally correlated
%$\Fand$.  So it is enough to analyze this case, as we will do in the rest of the proof.

%We could also do the dual thing for OR, and replace $\For$ by
%a``maximally anti-correlated''
%$\bFor(x,y)=max(0,\For(x,H)+\For(H,y)-1)$, but we will not require
%this for the rest of our analysis.

%[[YM: Does this imply that the AND plays a monotone line ??? ]]

%Assume that the AND player distribution $\bFand$ is maximally correlated. Then the support is
%$\{(x,y): \Fand(x,H)=\Fand(H,y) \}$

\ignore{ {\bf Comment:} In a similar way that we did for {\tt AND}'s
strategies, we could show that {\tt OR}'s strategy is dominated (as
a strategy) by a similar maximally `` anti-correlated''
$\bFor(x,y)=\max(0,\For(x,H)+\For(H,y)-1)$. We will not be needing
this though and will actually show that {\tt OR} must be exactly of
a certain form. [[YM: Do we need this?!]]}

\subsection{OR Bids on the Axis}\label{or-axis}

%In this section we are considering an equilibrium $(\Fand,\For)$ of
%the {\tt AND}-{\tt OR} game with {\tt OR}'s value being $v>1/2$, and
%with $\Fand$ maximally correlated.

Our approach here is to first show that {\tt OR} must always place a
certainly-loosing bid on one of the items; and then to show that
{\tt AND} bids arbitrarily low on each item, implying that
certainly-loosing bids of {\tt OR} must be 0.

We start by defining the low bids for {\tt AND} in a distribution
$\Fand$:

\begin{definition}
$\lowanda = inf \{ x \mid \Fand(x,H)>0 \}$ and $\lowandb = inf \{ y
\mid \Fand(H,y)>0 \}$.
\end{definition}

Definition~\ref{def:lowor} specifies similar quantities for {\tt OR} but in a {\em
different} way.

\begin{definition}
Given a distribution of {\tt AND}, $\Fand$, we say that a bid $x$ of
{\tt OR} for item $i$ is low, if either $x=0$ or {\tt OR} with bid
$x$ always looses  item $i$ to {\tt AND}. We denote a low bid of
{\tt OR} by $\ellori$.
\end{definition}

%Note that any value $x<\lowanda$ is a low value for {\tt OR}.
By definition, a bid $x > 0$ of {\tt OR} is low if $x < \lowandi$ or if $x=\lowandi$ and
either {\tt AND} doesn't have an atom at $\lowandi$ or {\tt AND}
always wins the tie for item $i$ at $\lowandi$.

\begin{lemma}
\label{lem:or-low} Assume that $(\bFand,\For)$ is an equilibrium,
then the {\tt OR} player bids a low bid on exactly one of the items.
%(Formally,[[YM: Need to fix]] if $\For(x,y)>0$ then either
%$\For(x,\ellorb)=\For(x,y)$ or $\For(\ellora,y)=\For(x,y)$, but not both.)
Moreover the probability of bidding low on each one of the
items is positive.
\end{lemma}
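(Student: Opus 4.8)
The plan is to establish two facts: first, that {\tt OR} never wins both items (so at least one bid is certainly-losing, i.e.\ low); and second, that {\tt OR} cannot place two low bids simultaneously (so exactly one is low). For the first part, I would argue by contradiction: suppose {\tt OR} wins both items with positive probability under some pair of bids $(x,y)$ in its support. Since $\bFand$ is maximally correlated, $\bFand(x,y)=\min(\bFand(x,H),\bFand(H,y))$, so the event ``{\tt OR} wins both items with $(x,y)$'' has probability $1-\max(\bFand(x,H),\bFand(H,y))$ essentially, and whenever {\tt OR} wins the item on which it bids less, it also wins the other. The key point is a deviation argument for {\tt OR}: if on the bid-pair $(x,y)$ with say $x\le y$ the {\tt OR} sometimes wins both items, it is paying $x+y$ but only values $v$; it would strictly prefer to drop the larger bid $y$ to $0$ (or to a certainly-losing level), keeping the item it wins via $x$ whenever it used to win that item, and strictly reducing its payment by $y>0$ with positive probability. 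This contradicts best response unless $y$ was already a low (certainly-losing) bid. Hence on every bid-pair in {\tt OR}'s support, at least one coordinate is a low bid.

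For the second part — that {\tt OR} does not bid low on \emph{both} items — I would invoke Corollary~\ref{cor:or-positive}: if {\tt OR} bids a pair $(x,y)$ with $\For(x,y)>0$ then $\por(x,y)>0$, i.e.\ {\tt OR} wins some item with positive probability. But if both $x$ and $y$ are low bids, then by definition {\tt OR} loses both items with certainty on that bid-pair, so $\por(x,y)=0$, contradicting the corollary. Combined with the first part, this shows {\tt OR} bids low on exactly one item (for every realization in its support).

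For the ``moreover'' clause — that the probability of bidding low on each item is positive — I would argue by symmetry/contradiction: suppose {\tt OR} bids low on item $2$ with probability zero, so {\tt OR} essentially always bids low on item $1$ and competitively on item $2$. Then {\tt AND} faces no real competition on item $1$: it can win item $1$ for arbitrarily little. Concretely, {\tt AND}'s best response would be to bid very low (near $\lowanda$, in fact near $0$) on item $1$, since any positive bid there is wasteful; this would force $h_1$ to be small, but then {\tt OR} could profitably win item $1$ by bidding just above $h_1$ while bidding $0$ on item $2$ — contradicting that {\tt OR} always bids low on item $1$. One has to be a little careful to route this through the already-established structural facts (Lemma~\ref{interval}, Corollary~\ref{cor:highestbid}, and $u_{or}>0$ from Lemma~\ref{lem:or-positive}), rather than re-deriving the equilibrium, but the thrust is that a ``dead'' item for {\tt OR} lets {\tt AND} acquire it for free, which is inconsistent with equilibrium when $v>1/2$.

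The main obstacle I expect is the ``moreover'' part: making the deviation argument airtight requires knowing that the certainly-losing threshold $\lowanda$ can be driven down, and handling the atom of $\bFand$ at its lower bid carefully (the definition of a low bid of {\tt OR} at $\lowandi$ depends on the tie-breaking rule). The first part is the conceptually central step, but it is a clean overbidding/payment-reduction deviation once maximal correlation is in hand; the tie-breaking bookkeeping at the boundary bids is where the care is needed.
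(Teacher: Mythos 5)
Your overall decomposition matches the paper's (at least one bid low, not both low, each item gets low bids with positive probability), but each of the three parts has a gap.

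In the first part, the criterion for which bid {\tt OR} should drop is wrong. You propose to drop the \emph{larger} bid $y$ (with $x\le y$), on the grounds that whenever {\tt OR} wins the item it bids less on it also wins the other. Under maximal correlation this is false in general: the containment of winning events is governed by $\bFand(x,H)$ versus $\bFand(H,y)$, not by $x$ versus $y$. If {\tt AND}'s marginal on item~1 is concentrated high and its marginal on item~2 is concentrated low, one can have $x\le y$ together with $\bFand(x,H)>\bFand(H,y)$, and then dropping $y$ to~$0$ strictly decreases $\por$. The paper instead drops the bid on the item with the \emph{smaller} winning probability $\min\{\bFand(x,H),\bFand(H,y)\}$; with that criterion $\por(x,y)=\por(x,0)$ exactly, the deviation is payment-only, and one doesn't even need the premise that {\tt OR} wins both items with positive probability.

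In the second part, Corollary~\ref{cor:or-positive} alone does not yield a contradiction. The bid~$0$ is \emph{by definition} a low bid, and {\tt OR} can win with a~$0$ bid if {\tt AND} has an atom at~$0$ and the tie breaks for {\tt OR}; so ``both coordinates low'' does not imply $\por(x,y)=0$. The paper therefore splits into subcases: if {\tt OR} always loses when both bids are low, the utility there is~$0$, contradicting Lemma~\ref{lem:or-positive}; and if {\tt OR} sometimes wins (which can only happen at a~$0$ bid), one exhibits a profitable deviation for {\tt AND}, raising every bid by $\epsilon<p/3$, gaining winning probability at least $p$ while paying at most $2p/3$ more. That $\epsilon$-shift for {\tt AND} is the missing idea.

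For the ``moreover'' clause, your sketch is directionally reasonable but skips the essential case analysis. ``{\tt AND} faces no competition on item~1, so it bids near~$0$'' is not forced: {\tt OR}'s low bids on item~1 can range up to some $\bellora>0$, and {\tt AND} would then bid just above $\bellora$, not near~$0$. The paper distinguishes whether {\tt AND} always wins item~1 or not, and in the first case splits further on $v<1$ versus $v>1$ (which determines who wins item~2 and hence which player has a profitable deviation), and in the second case argues via the expected price of item~2. Invoking Lemma~\ref{interval}, Corollary~\ref{cor:highestbid}, and Lemma~\ref{lem:or-positive} is the right toolkit, but it does not substitute for those case distinctions.
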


\begin{proof}
Since $\bFand$ is maximally correlated, we have that the probability
that the {\tt OR} wins at least one item using a bid $(x,y)$ is
$\max\{\bFand(x,H),\bFand(H,y)\}$ (this holds since the probability
of winning at least one item is
$\por(x,y)=\bFand(x,H)+\bFand(H,y)-\bFand(x,y)$, and since $\bFand$
is maximum correlated $\bFand(x,y) =
\min\{\bFand(x,H),\bFand(H,y)\}$). Assume that $\bFand(x,H)\geq
\bFand(H,y)$. Consider a deviation of the {\tt OR} where she bids
$(x,0)$ instead of $(x,y)$. The probability that {\tt OR} wins at
least one item is the same, i.e., $\por(x,y)=\por(x,0)$.
The payment decreases % from $x\bFand(x,H)+y\bFand(H,y)$ to $x\bFand(x,H)$.
by $y$ times the probability that {\tt OR} wins the second item.
This is a strict decrease unless $y$ is low.  This establishes that
in each bid $(x,y)$ in the support of $\For$ at least one of $x$ or
$y$ must be low.

We now want to show that $\For(\ellora,\ellorb)=0$ for any low bids
$\ellora$ and $\ellorb$. Namely, the probability that the {\tt OR}
has both bids low is zero. Assume by contradiction that
$\For(\ellora,\ellorb)> 0$ for some low bids $\ellora$ and $\ellorb$
and consider the
following cases where {\tt OR} bids $(x,y)$ such that $x\leq \ellora$ and $y\leq \ellorb$:\\
{\bf (1)} {\tt OR} always looses both items. This implies that the
{\tt OR}  has
zero utility, contradicting Lemma~\ref{lem:or-positive}. \\
{\bf (2)} {\tt OR} wins with positive probability. Since {\tt OR}
always looses with a low value which is not zero we may assume that
either $\For(\ellora,0)> 0$ or $\For(0,\ellorb)> 0$. We assume that
$\For(\ellora,0)> 0$, the case where $\For(0,\ellorb)> 0$ is
symmetric. We also denote the probability that  {\tt OR} wins
 with a bid $(x,0)$ where $x\le \ellora$ by $p=\por(\ellora,0)
> 0$. Consider now the a deviation for the {\tt AND} player, in which
it increases every bid on every item by $\epsilon < p/3$. This
increases its probability of winning (and its expected welfare) by
at least $p$, and increases its payments by $2p/3$, which gives a
net utility gain of at least $p/3$.

Therefore, we have established that the {\tt OR} always bids one low
value and never bids both values low.

It remains to show that {\tt OR} must place a low bid with positive
probability on each item. By contradiction, assume that the {\tt OR}
always bids low only on one of the items, say item $1$. We have two cases:\\
{\bf (1)} The {\tt AND} player always wins item $1$. Let $\bellora$
be the supermum of the (low) bids of  {\tt OR}  on item $1$. Then it
is clear that in equilibrium the {\tt AND} never bids above
$\bellora+\epsilon$ on item $1$,
%and never bids below $\ellora-\epsilon$,
for any $\epsilon>0$. Now consider item $2$.
%In this case the price of item $2$ will always be $\min\{1,v\}$:
We have two cases depending on the relation between  $v$ and $1$:
(1a) If $v<1$ then (given that the {\tt AND} always wins item $1$
and no-matter in what price) the {\tt AND} player will win item $2$
and pay at most $v+\epsilon$. Since the {\tt AND} player  wins both
items the {\tt OR} player has  zero utility, contradicting
Lemma~\ref{lem:or-positive}. (1b) If $v>1$ then the {\tt OR} player
always wins item $2$. This implies that the {\tt AND} player always
loses, and has negative utility unless $\bellora=0$. However, in
this case the {\tt OR} player can deviate and bid $(2\epsilon,0)$
and always win item $1$, contradicting the assumption that we
have an equilibrium.\\
{\bf (2)} The {\tt OR} player wins item $1$ with positive
probability $p > 0$. This can happen only if $\bellora=0$. Let
$\delta$ denote the expected price of item $2$. If $\delta = 0$, the
{\tt AND} player can strictly increase its utility by bidding $(p/3,
p/3)$. If $\delta > 0$, then the auction has expected revenue at
least $\delta$, and therefore the expected utility of the {\tt OR}
player is at most $v - \eta$ for some $\eta > 0$. Since the {\tt OR}
player is always bidding $0$ on item $1$ the {\tt AND} player will
bid at most $\epsilon=\eta/3$ on item $1$ (this is true for any
$\epsilon
> 0$). But then bidding $(2\eta/3, 0)$ is a profitable deviation for
the {\tt OR} player.

%Since the {\tt
%OR} player is always bidding $0$ on item $1$ the {\tt AND} player
%will bid at most $\epsilon$ on item $1$. This implies that the {\tt
%OR} has a deviation $(2\epsilon,0)$ which is guaranteed to win the
%first item, and has utility $v-2\epsilon$. Therefore the {\tt OR}
%player wins some item with probability at least $1-2\epsilon/v\geq
%1-4\epsilon$. This implies that the {\tt AND} player's probability
%of winning is at most $4\epsilon$, which also bounds its expected
%utility and expected payments. Now consider the player that has a
%higher probability of winning item $2$:\\
% {\bf (2a)} If the {\tt OR}
%player wins item $2$ with probability at least half, its expected
%payment when winning this item is at most $4\epsilon$. So if the
%{\tt AND} player bids $(\epsilon,8\epsilon)$ it will win the second
%item with
% probability  at least quarter, and have utility almost
%quarter, compared to at most $4\epsilon$ that he has before
%deviating. This implies that the {\tt OR} cannot win item $2$ with
%probability at
%least half.\\
%
%{\bf (2b)} If the {\tt AND} player wins item $2$ with probability at
%least half. Consider a deviation of the {\tt AND} player that always
%bids  $\epsilon$ on the first item. This will guarantee that it will
%always win the first item, and have expected payment $5\epsilon$,
%hence utility of at least $1/2-5\epsilon$. The {\tt AND} player can
%have a utility of at least $1/2-5\epsilon$ only if it wins both
%items with that probability. But then the {\tt OR} cannot win item
%$1$ with probability at least $1-4\epsilon$.

We have established that the {\tt OR} player has to bid a low bid
with a positive probability on each item.
\qed\end{proof}

Now let us define for {\tt OR} $\lowori$, which is different from the definition of {\tt AND}.

\begin{definition}
\label{def:lowor} $\lowora = \inf \{ x \mid \For(x,y)>0 \:\mbox{and
y is low}\}$ and $\loworb = \inf \{ y \mid \For(x,y)>0 \:\mbox{and x
is low}\}$.
\end{definition}

\begin{lemma}
\label{lemma:atom_both} Assume that $(\bFand,\For)$ is an
equilibrium and $\lowanda = \lowora=l$. Then at most one of the
players can have an atom at $l$ in the marginal distribution of item
$1$.
\end{lemma}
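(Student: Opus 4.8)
The plan is to assume, toward a contradiction, that both players have an atom at $l$ in the item-1 marginal: say AND bids $l$ on item 1 with probability $\alpha>0$ and OR bids $l$ on item 1 with probability $\beta>0$. Let $q\in[0,1]$ be the probability that AND wins a tie at $l$ on item 1; this is well defined because the tie-breaking rule for item 1 depends only on the item-1 bids. Since $l=\lowanda$, AND never bids below $l$ on item 1, and (by the definition of a low bid, using $l>0$) a bid of $l$ on item 1 by OR is low if and only if $q=1$; I also use that AND never bids above $1$ on item 1, hence $l<1$, since bidding more than $1$ on item 1 is dominated by bidding $0$ there. The proof splits according to whether $q<1$ or $q=1$.

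Suppose first that $q<1$. Then $l$ is not a low bid for OR, so by Lemma~\ref{lem:or-low} every bid $(l,y)$ in OR's support has $y$ low; thus in every realization making up OR's atom at $l$ the OR player loses item 2. Consider the deviation of AND that replaces each realized bid $(l,z)$ by $(l+\delta,z)$, with $\delta>0$ small enough that OR has no atom in $(l,l+\delta]$ on item 1. By independence of the two players' randomization, with probability $\alpha\beta(1-q)>0$ it happens that AND bids $l$ on item 1, OR bids $l$ on item 1, and the item-1 tie is resolved in OR's favour; in each such realization AND previously won only item 2 (OR conceded it), whereas after the deviation AND additionally wins item 1 and hence wins both — a gain of $1$ in value. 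Against this, AND's expected payment rises by the $\delta$-overpayment on the item-1 copies it already won plus the price $\approx l$ of the newly-won item-1 copies, i.e.\ by $\approx l\,\alpha\beta(1-q)+O(\delta)$. Since $l<1$, for small $\delta$ the net change is $\approx(1-l)\,\alpha\beta(1-q)>0$, contradicting that $\bFand$ is a best response.

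Suppose now that $q=1$, so a bid of $l$ on item 1 is a low bid for OR; by Lemma~\ref{lem:or-low}, whenever OR bids $l$ on item 1 it loses item 1 and its item-2 bid is competitive. Here the maximal correlation of $\bFand$ is used: since $l=\lowanda$, whenever AND bids $l$ on item 1 its item-2 bid lies in the bottom $\alpha$-mass of the item-2 marginal, hence is at most some threshold $m$. If AND never wins item 2 when bidding $l$ on item 1, then it never wins both items on the event ``AND bids $l$ on item 1'', so (as $l>0$ and it wins item 1 there with probability at least $\beta$) its conditional utility on this positive-probability event is strictly negative — impossible, since AND can secure utility $0$ by bidding $(0,0)$. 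Otherwise AND wins item 2 with positive probability when bidding $l$ on item 1; one then compares, for $\epsilon>0$ small, OR's atom bids $(l,y)$ with the deviation $(l+\epsilon,y)$: since AND never bids below $l$ on item 1, OR's deviated bid wins item 1 exactly when AND bids $l$ on item 1, and — by independence of the two players' randomization together with the assumption of this subcase — on a positive-probability subset of these realizations OR's (unchanged, competitive) item-2 bid $y$ still loses item 2 to AND's item-2 bid, so OR converts a total loss there into winning an item worth $v$ at an extra payment of only $O(\epsilon)$, a contradiction. I expect this last subcase to be the main technical obstacle: one has to rule out the configuration in which the item-2 bids that OR places together with $l$ on item 1 all weakly exceed the item-2 bids that AND places together with $l$ on item 1, and doing so requires combining independence of the randomizations with the equilibrium conditions on item 2 (in particular, in this subcase AND's item-2 bids not exceeding $m$ win item 2 with positive probability, so OR cannot best-respond by placing all of that item-2 mass strictly above them).
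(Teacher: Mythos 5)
Your proof takes a genuinely different route from the paper's, and it has several gaps that the paper's argument avoids.

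\emph{The case split is different.} The paper first observes that since $\lowora = l$ and (by Lemma~\ref{lem:or-low}) {\tt OR} never has both bids low, $l$ is not a low bid for {\tt OR} on item $1$; hence {\tt OR} wins the tie at $l$ with positive probability, i.e.\ your quantity $q$ is always $<1$. Your $q=1$ case is therefore not needed (and is the one you yourself flag as ``the main technical obstacle''). The paper then splits on $\lowanda < 1$ versus $\lowanda \ge 1$, whereas you split on $q$. Your assertion that $l<1$ is a gap: the fact that bids strictly above $1$ on item $1$ are dominated only yields $l\le 1$, and the boundary $l=1$ is a genuine configuration that the paper handles with an entirely separate argument (``{\tt AND} must always bid $0$ on item $2$, \dots then {\tt OR} gains by bidding $(\epsilon,0)$''). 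You have no counterpart to this.

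\emph{The deviation in your $q<1$ case is also not quite right.} You shift only {\tt AND}'s item-$1$ bid from $l$ to $l+\delta$ and claim that on the event \{{\tt AND} bids $l$ on item $1$, {\tt OR} bids $l$ on item $1$, tie goes to {\tt OR}\}, {\tt AND} ``previously won only item $2$ (OR conceded it)'' and therefore gains the full bundle after the shift. This breaks down when {\tt OR}'s paired item-$2$ bid is $y=0$, which is classified as ``low'' by definition regardless of who wins, and {\tt AND}'s paired item-$2$ bid is also $z=0$ with the item-$2$ tie resolved in {\tt OR}'s favour. In that realization {\tt AND} did not win item $2$; after your deviation {\tt AND} now wins item $1$ alone at cost $l+\delta$, a \emph{loss} of about $l$. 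Since $\lowandb=0$ is not yet ruled out at this stage of the development (Lemma~\ref{lem:lowand} comes afterwards and itself depends on this lemma), you cannot dismiss this realization. The paper's deviation raises \emph{both} of {\tt AND}'s bids by $\epsilon$, so {\tt AND}'s item-$2$ bid becomes strictly positive and beats the low $y$; that one change makes the argument go through cleanly, at an overpayment of at most $2\epsilon$ per realization against a constant gain of at least $\alpha\beta(1-q)$ in the probability of winning the bundle.

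In short: the structure (assume both atoms at $l$, find a profitable unilateral deviation) matches the paper, but the case analysis is on the wrong variable, the $l\ge 1$ configuration is missing, and the item-$1$-only shift leaves a positive-probability event on which the deviation loses. Replacing your item-$1$ shift by a shift of both coordinates, deriving $q<1$ from Lemma~\ref{lem:or-low} rather than case-splitting on it, and adding the $\lowanda\ge 1$ branch would bring your argument in line with the paper's.
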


\begin{proof}
Assume by  contradiction that they both have atoms at $l$.
%i.e., $\Fand(l,H)>0$ or $\For(\bellora,y)>0$.
By Lemma~\ref{lem:or-low}, the {\tt OR} player never has both bids
low, so $\lowora$ cannot be low. Since $\lowora$ is not low, the
{\tt OR} wins the tie with non-zero probability.

Now consider the following cases depending on $\lowanda$:\\
(1) If $\lowanda < 1$ then {\tt AND} can gain by always increasing
both its bids by $\epsilon$: The payments increases by at most
$2\epsilon$ but the winning probability increases by a constant.
({\tt AND} now wins the atom.)\\
(2) Assume $\lowanda \ge 1$. The {\tt AND} must  always bid $0$ on
item $2$, since if it bids higher, when {\tt OR} bids low on $2$
(which happens with positive probability by Lemma \ref{lem:or-low})
{\tt AND} has negative utility. In this case {\tt OR} gains by
bidding $\epsilon$ on item $2$ and $0$ on item $1$. It will always
win item $2$ and pay $\epsilon$ whereas previously {\tt OR} paid at
least $1$ when it won item $1$, which occurred with constant
probability.
\qed\end{proof}

We now show that the low values of {\tt AND} are indeed zero.

\begin{lemma} \label{lem:lowand}
$\lowanda = \lowandb = 0$.
\end{lemma}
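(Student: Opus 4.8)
The plan is to argue by contradiction, assuming without loss of generality that $\lowanda = l > 0$ (the case $\lowandb > 0$ is symmetric, and we may choose the item with the larger low bid). Since $\Fand = \bFand$ is maximally correlated, the marginal CDF satisfies $\bFand(x,H) = 0$ for all $x < l$, so {\tt AND} essentially never bids below $l$ on item $1$; by maximal correlation it also bids at least $\min$-coordinate $l$ along the diagonal in the sense that $\bFand(x,y)=\min(\bFand(x,H),\bFand(H,y))$. The first step is to locate where {\tt OR}'s low bids on item $1$ sit relative to $l$. By Lemma~\ref{lem:or-low}, {\tt OR} bids low on item $1$ with positive probability, and such a low bid $x$ is, by the remark following Definition~\ref{def:lowor}, a bid with $x \le \lowanda = l$ that always loses (plus possibly $x=0$). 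So $\lowora \le l$, and I want to show $\lowora = l$ leads to a contradiction and $\lowora < l$ is impossible too.

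The key move is to exploit that {\tt AND} is ``wasting money'' by never bidding cheaply on item $1$. Concretely, consider the infimum $l$ of {\tt AND}'s bids on item $1$ and look at {\tt AND}'s behavior when {\tt OR} places its low (losing) bid on item $1$ — which has positive probability. In those events {\tt AND} wins item $1$ for sure, at a price $\ge l - \epsilon$, regardless of how little it could have bid. So {\tt AND} could deviate: whenever it is about to bid some $(x,y)$ with $x$ near $l$, instead bid $(\delta, y)$ for tiny $\delta$. This loses item $1$ only against {\tt OR}'s non-low bids on item $1$, i.e. against mass of {\tt OR} at bids $\ge \lowora$ on item $1$. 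If $\lowora > 0$ we can pick $\delta < \lowora$ and lose nothing in winning probability against the low-bid events; the expected payment strictly drops (by roughly $(l-\delta)$ times the positive probability that {\tt OR} bids low on item $1$), contradicting that {\tt AND} best-responds. Hence we must have $\lowora = 0$, i.e. {\tt OR}'s low bids on item $1$ accumulate at $0$. But then, using Lemma~\ref{lemma:atom_both} (with $l = \lowanda = \lowora = 0$ forced, or comparing atoms), together with the fact that $\lowora$ is \emph{not} low (Lemma~\ref{lem:or-low} says {\tt OR} never bids both low, so the item-$1$ coordinate at $\lowora$ is not the low one... wait — careful: here item $1$ \emph{is} the low one for those bids). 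Let me restate: the contradiction comes purely from the deviation argument above once $\lowora=0$, because then {\tt OR} places arbitrarily small positive losing bids on item $1$ with positive probability, {\tt AND} wins those for a price $\ge l$, and {\tt AND}'s downward deviation to $\delta \to 0$ on item $1$ strictly saves money while keeping the same win probability (it still beats bids $<\delta$, and the mass of {\tt OR} bids in $(\delta, l)$ on item $1$ goes to $0$ as $\delta \to 0$ since those would have to be non-low yet $< \lowanda$, impossible, so in fact there is no such mass at all). This contradicts {\tt AND} best-responding, so $l = 0$.

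The main obstacle I anticipate is handling the atom bookkeeping at the threshold $l$ cleanly: whether {\tt AND} has an atom exactly at $l$, whether {\tt OR}'s low bids include an atom at $l$ or only approach it, and ensuring the deviation of {\tt AND} genuinely keeps the winning probability against \emph{all} of {\tt OR}'s item-$1$ bids that are $< l$ (they are all low by definition since $l = \lowanda$, so {\tt AND} wins them all already, and a bid of $\delta$ still wins all bids $\le \delta$ — the only lost mass is {\tt OR}'s bids in $(\delta, l)$, which must be non-low, hence $\ge \lowanda = l$, hence empty). Once that is pinned down, the payment saving is a clean positive constant and the contradiction follows. I would also need the symmetric case and the observation that $\max(\lowanda, \lowandb) > 0$ reduces to one coordinate, which is immediate.
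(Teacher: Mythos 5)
Your deviation argument for {\tt AND} has a genuine gap, rooted in a misreading of which bids are ``low.'' A bid $z$ of {\tt OR} on item~$1$ is low when it \emph{always loses to {\tt AND}'s equilibrium distribution}, and since $\lowanda = l$ means {\tt AND} never bids below $l$, \emph{every} {\tt OR} bid $z$ with $0 < z < l$ is low, not non-low. Your closing claim that mass of {\tt OR} in $(\delta,l)$ on item~$1$ ``would have to be non-low yet $< \lowanda$, impossible, so in fact there is no such mass at all'' is therefore exactly backwards. Such bids are perfectly consistent with Lemma~\ref{lem:or-low}: a positive losing bid on item~$1$ costs {\tt OR} nothing and has the same outcome as bidding $0$, so nothing proved so far forces {\tt OR} to place its low bids at $0$ rather than somewhere in $(0,l)$. (That conclusion is precisely what Corollary~\ref{axis} extracts \emph{from} Lemma~\ref{lem:lowand}, so assuming it here is circular.) When {\tt AND} deviates from a bid $x\approx l$ on item~$1$ down to some tiny $\delta$, it now \emph{loses} item~$1$ against {\tt OR}'s low bids in $(\delta, l)$ -- bids that its original strategy beat -- and since {\tt AND}'s value is all-or-nothing, each such lost event costs {\tt AND} a whole unit of value, not just item~$1$. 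So the ``strictly saves money, keeps winning probability'' conclusion does not follow. Relatedly, you assert $\lowora \le l$, but $\lowora$ is defined as the infimum of {\tt OR}'s item-$1$ bids \emph{when item~$2$ is low}, i.e.\ the infimum of {\tt OR}'s \emph{non-low} bids on item~$1$, which is $\ge l$; the chain of reasoning ending in ``hence $\lowora = 0$'' is not consistent with the paper's definition.

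The deeper issue is that the argument never engages with item~$2$, which is where the actual work happens in the paper's proof. The paper looks at {\tt AND}'s bids in a small box $B(\epsilon)$ near the corner $(\lowanda,\lowandb)$ -- bids that {\tt AND} actually places (so no deviation is needed and the low-bid-of-{\tt OR} issue disappears), shows their expected payment is bounded away from $0$ (because {\tt OR} bids low on item~$1$ with positive probability and $\lowanda>0$), and then shows their \emph{winning} probability vanishes as $\epsilon\to 0$. Establishing that vanishing probability is the delicate part: it splits into the case where {\tt OR} is low on item~$1$ (then {\tt AND} must win item~$2$ near $\lowandb$, analyzed via the relation between $\lowandb$ and $\loworb$) and the case where {\tt OR} is low on item~$2$ (then {\tt AND} must win item~$1$ near $\lowanda$ against {\tt OR}'s non-low bids there), with Lemma~\ref{lemma:atom_both} ruling out the bad atom configurations. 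A repaired version of your argument would need to supply exactly this second-item, atom-aware analysis; the one-item deviation as written does not suffice.
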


\begin{proof}
Assume by way of contradiction that $\lowanda>0$,
and let us look at the relation between $\lowandb$ and $\loworb$.
By the previous lemma if $\lowandb = \loworb=l$ then at most one
of the players can have an atom at $l$, so we are left
with two cases.

CASE I: If $\lowandb > \loworb$ or $\lowandb = \loworb$  but {\tt
AND} has no atom at $\lowandb$.  Consider the expected utility of
{\tt OR} when it bids $x\in[\loworb,\loworb+\epsilon]$ on item $2$
(with a low bid on the first item). {\tt OR} probability of winning
with such bids goes to zero as $\epsilon$ goes to zero. Therefore
{\tt OR}'s utility goes to zero, in contradiction to
Lemma~\ref{lem:or-positive}, which shows that {\tt OR} has a
strictly positive utility.

CASE II:  If $\lowandb < \loworb$ or $\lowandb = \loworb$  but {\tt
OR} has no atom at $\lowandb$.
Let us consider the set $B(\epsilon)$ of all the bids of {\tt AND}
which are coordinate-wise no larger than $(\lowanda +
\epsilon,\lowandb + \epsilon)$ as $\epsilon$ approaches zero.
First note that since $\lowanda>0$ and since by
Lemma~\ref{lem:or-low} {\tt OR} bids low on item $1$ with constant
probability, the expected payment of {\tt AND} on $B(\epsilon)$ is
some positive constant which is independent of $\epsilon$.
On the other hand we show that the probability that {\tt AND} wins
with a bid in $B(\epsilon)$  approaches $0$ as
  $\epsilon$ approaches
zero. This  implies that there exists and $\epsilon > 0$ such that
the {\tt AND} has negative utility for $B(\epsilon)$ and hence a
contradiction.

Consider the probability of {\tt AND} winning both items with  a bid
in $B(\epsilon)$. By Lemma \ref{lem:or-low} this is the sum of the
probability that {\tt AND} wins both items when {\tt OR} bids low on
item 1 and the probability that {\tt AND} wins both items when {\tt
OR} bids low on item 2. We
estimate these probabilities in the following cases.\\
 (IIa) If {\tt
OR} bids low on the first item, then the probability {\tt AND} wins
the second item  with a bid in $B(\epsilon)$  goes to zero with
$\epsilon$ by our assumption that $\lowandb < \loworb$ or $\lowandb
= \loworb$ but {\tt
OR} has no atom at $\lowandb$.\\
 (IIb)
If {\tt OR} bids low on the second item, then it cannot bid low on
the first item so it bids at least $\lowanda$ on item $1$.
% and may bid $\lowanda$ only if {\tt AND} has an atom there (since otherwise that bid looses with probability 1).
%
If {\tt AND} does not have an atom at $\lowanda$, then {\tt OR} bids
strictly above $\lowanda$ ($\lowanda$ is a low value), therefore,
{\tt AND}'s winning probability  with $B(\epsilon)$ goes to zero
with $\epsilon$.
If {\tt AND} has an atom at $\lowanda$, by
Lemma~\ref{lemma:atom_both}, {\tt OR} cannot have an atom at
$\lowanda$ so the probability that {\tt OR} bids less than $\lowanda
+ \epsilon$ goes to zero as $\epsilon$ approaches zero, and hence
the probability of {\tt AND} winning the first item goes too zero.
Hence, we have established that the probability that {\tt AND} wins
goes to zero as $\epsilon$ approaches zero.
 \qed\end{proof}

By Lemma \ref{lem:lowand}, any positive bid of {\tt OR} has a
positive probability of winning, hence the only low bids of {\tt OR}
is zero.

\begin{corollary} \label{axis}
{\tt OR} always bids either $(x,0)$ with $x>0$ or $(0,y)$ with
$y>0$.  Both of these events happen with positive probability.
\end{corollary}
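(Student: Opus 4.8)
The plan is to deduce Corollary~\ref{axis} directly from Lemma~\ref{lem:or-low} and Lemma~\ref{lem:lowand}. The bridge is the claim, already asserted just before the corollary, that once $\lowanda=\lowandb=0$ the only low bid of {\tt OR} on either item is $0$. So the first step is to nail this down. Fix an equilibrium $(\bFand,\For)$ (with $\bFand$ maximally correlated, as throughout this section) and suppose {\tt OR} places a bid $x>0$ on item~$1$ (item~$2$ is symmetric). By Lemma~\ref{lem:lowand}, $\lowanda=0$; in particular $\bFand(x/2,H)>0$, since otherwise the set $\{x'\mid\bFand(x',H)>0\}$ would lie in $(x/2,\infty)$ and force $\lowanda\ge x/2>0$. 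Hence with positive probability {\tt AND} bids at most $x/2<x$ on item~$1$, in which case {\tt OR}'s bid of $x$ is the strict maximum and {\tt OR} wins item~$1$ regardless of the tie-breaking rule. Since a bid of {\tt OR} on item~$i$ is declared low only when it is $0$ or when {\tt OR} \emph{always} loses that item to {\tt AND}, we conclude that a bid of {\tt OR} on item~$i$ is low if and only if it equals $0$.

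The second step feeds this equivalence into Lemma~\ref{lem:or-low}. That lemma states that, in equilibrium, {\tt OR} bids a low value on exactly one of the two items, and that each item receives a low bid with positive probability. Using the equivalence ``low $\iff$ zero'', the first assertion says that every bid $(x,y)$ in the support of $\For$ has exactly one coordinate equal to $0$; as bids are nonnegative, such a bid is either $(x,0)$ with $x>0$ or $(0,y)$ with $y>0$. The ``moreover'' part of Lemma~\ref{lem:or-low} then gives that each of these two events occurs with positive probability, which is precisely the statement of Corollary~\ref{axis}.

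Since the corollary is little more than a translation of the two cited lemmas, I do not anticipate a genuine obstacle. The one point deserving care is reconciling the two notions of ``low'' used in the paper --- {\tt AND}'s low bids $\lowanda,\lowandb$ versus the per-item definition of {\tt OR}'s low bids --- and in particular the boundary case $x=\lowanda$ where the tie-breaking rule could in principle matter; but because $\lowanda=0$ there is no positive bid of {\tt OR} at or below $\lowanda$, so that case is vacuous and the argument above goes through cleanly.
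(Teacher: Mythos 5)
Your proposal is correct and follows essentially the same route as the paper: the paper itself derives the corollary from Lemma~\ref{lem:or-low} combined with the observation, stated just before the corollary, that once $\lowanda=\lowandb=0$ any positive bid of {\tt OR} wins with positive probability and hence the only low bid of {\tt OR} is $0$. Your write-up merely makes explicit the small step that $\lowandi=0$ forces $\bFand(x/2,H)>0$ for every $x>0$, which is exactly the argument the paper compresses into one line.
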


\subsection{{\tt AND} Bids on the Diagonal}\label{and-diagonal}

In this section we are considering an equilibrium $(\bFand,\For)$ of
the {\tt AND}-{\tt OR} game, where $\bFand$ maximally correlated.

\begin{lemma} \label{diag}
$Pr_{(x,y)\sim \bFand} [x \ne y] = 0$.
\end{lemma}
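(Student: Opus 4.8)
The plan is to suppose, for contradiction, that $\bFand$ places positive mass off the diagonal, and to find a profitable deviation either for {\tt AND} or for {\tt OR}. Since $\bFand$ is maximally correlated, $\bFand(x,y)=\min(\bFand(x,H),\bFand(H,y))$, so the joint distribution is essentially supported on a monotone curve in the $(x,y)$-plane; if it is not supported on the diagonal $\{x=y\}$, then there is some bid $(x,y)$ with $x<y$ (or symmetrically $x>y$) that {\tt AND} plays with positive probability in some small box around it. First I would recall from Corollary~\ref{axis} that {\tt OR} bids only on the axes, i.e. either $(a,0)$ or $(0,b)$ with $a,b>0$, and that both happen with positive probability, and from Lemma~\ref{lem:lowand} that $\lowanda=\lowandb=0$ so {\tt AND} bids arbitrarily low on each item.

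The key observation is that with maximal correlation the event ``{\tt AND} wins item $1$'' using the bid $(x,y)$ with $x<y$ contributes nothing extra: {\tt AND}'s probability of winning both items against {\tt OR}'s axis bid depends only on $\min(x,y)=x$, since against a bid $(0,b)$ {\tt AND} wins item $1$ for free and needs $x\ge b$ to win item $2$... wait — here the asymmetry bites. Let me reorganize: consider a bid $(x,y)$ with $x<y$ in {\tt AND}'s support. Against {\tt OR}'s bid $(0,b)$, {\tt AND} wins both iff $y\ge b$ (item $1$ is free); against {\tt OR}'s bid $(a,0)$, {\tt AND} wins both iff $x\ge a$. So the $y$-coordinate only helps against $(0,b)$-type bids and the $x$-coordinate only against $(a,0)$-type bids. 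Now I would compare $(x,y)$ to the diagonal bid $(y,y)$: the latter wins in a superset of cases (it still wins whenever $x\ge a$ would have won, since $y>x\ge a$, plus possibly more), while paying $x+y$ versus $2y$, i.e. strictly more. That is the wrong direction, so instead compare $(x,y)$ with $x<y$ to the bid $(x,x)$: this loses exactly the {\tt OR}-bids $(0,b)$ with $x<b\le y$, and pays $2x$ instead of $x+y$, saving $y-x>0$. So $(x,x)$ is profitable unless the probability that {\tt OR} bids $(0,b)$ with $b\in(x,y]$ is large enough to offset the saving $y-x$ — and the value gained there is exactly $1$ per such event. The main obstacle is thus to rule out that {\tt OR} places just the right amount of mass in the interval $(x,y]$ on item $2$ to make {\tt AND} indifferent between $(x,y)$ and $(x,x)$; if that happens for a positive-measure set of off-diagonal bids of {\tt AND}, I need a {\em second} deviation, this time by {\tt OR}.

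For the second deviation: if {\tt AND} puts positive mass on bids $(x,y)$ with $x<y$, then on item $2$ {\tt AND} bids $y$ while only bidding $x<y$ on item $1$; aggregating over such bids, there is a gap phenomenon — more precisely, I would argue that {\tt AND}'s marginal on item $2$ ``uses up'' bid levels that, by the indifference forced above, {\tt OR} must be matching with mass on item $2$ in exactly those levels; but then on item $1$ {\tt OR} can shift that mass down (since winning item $2$ at level just above $x$ versus just above some larger value saves money while still beating {\tt AND}'s item-$1$ bid of $x$), contradicting best response. Cleaner still, I expect the intended argument leverages Lemma~\ref{interval} (applied with roles of {\tt AND} and {\tt OR} possibly swapped): if {\tt OR} is forced to put mass on item $2$ in every interval $(x,y]$ arising from {\tt AND}'s off-diagonal support, while {\tt AND}'s item-$1$ marginal skips that same interval $(x,y)$ (because the off-diagonal bid used $x$ on item $1$), then the interval hypothesis of Lemma~\ref{interval} is met and forces the distributions to terminate there — contradicting Corollary~\ref{cor:highestbid} that both items have the same top bid $h$ with positive mass spread up to $h$. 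I anticipate the delicate point is making ``{\tt AND}'s off-diagonal support'' into a usable statement about the marginals (it is a statement about the joint law), which is precisely where maximal correlation is indispensable: off-diagonal joint mass of a maximally correlated CDF forces a genuine discrepancy between the two marginals on a common interval, and that discrepancy is what Lemma~\ref{interval} then kills.
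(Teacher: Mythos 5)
Your proposal is on the right track in one important respect — it correctly identifies Lemma~\ref{interval} as the tool that closes the argument, and it correctly senses that the mechanism must run through a gap in somebody's marginal. But the actual structure of the paper's proof is almost the opposite of what you propose, and your version has a genuine gap at the crux.

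The paper's argument: if $Pr[x<y]>0$, pick rationals $0<b<c<H$ with $Pr[\text{{\tt AND} bids in }[0,b)\times(c,H)]>0$. Maximal correlation gives $\bFand(b,c)=\min(\bFand(b,H),\bFand(H,c))$, and since $\bFand(b,c)=\bFand(b,H)$ is ruled out by the assumed off-diagonal mass, we must have $\bFand(b,c)=\bFand(H,c)$. This is the key extracted fact: \emph{whenever {\tt AND} bids at most $c$ on item~2 it also bids at most $b$ on item~1}. Consequently {\tt OR}'s bid $(b,0)$ strictly dominates $(0,y)$ for any $y\in(b,c)$: it wins item~1 whenever $(0,y)$ wins item~2, pays strictly less, and wins with positive probability since $\lowandb=0$. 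So {\tt OR} has a \emph{gap} on item~2 over $(b,c)$. Meanwhile {\tt AND} bids above $c$ on item~2 with positive probability — contradiction with Lemma~\ref{interval}.

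Your version misidentifies both the player and the item carrying the gap. You argue that {\tt OR} must be placing \emph{mass} on item~2 in $(x,y]$ (to justify {\tt AND}'s indifference), and that the gap must sit in {\tt AND}'s item-1 marginal. Neither step holds: an off-diagonal support point of a maximally correlated joint does not create a gap in a marginal — it only forces $\bFand(b,H)>\bFand(H,c)$, a stochastic discrepancy between the two marginals, not an empty interval in either one — so the hypothesis of Lemma~\ref{interval} (a player entirely skipping a range on a single item) is not met on your route. And the claim that {\tt OR} must place mass in $(x,y]$ on item~2 is the reverse of what actually happens; the proof shows {\tt OR} is forced to have \emph{no} mass there. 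Your first deviation ({\tt AND} from $(x,y)$ to $(x,x)$) is a reasonable heuristic but is not used and, as you yourself note, can fail; the ``second deviation for {\tt OR}'' is too vague to fill the hole. The missing idea is exactly the translation of maximal correlation into the conditional statement $\{Y\le c\}\subseteq\{X\le b\}$, which is what makes {\tt OR}'s cheap bid $(b,0)$ dominate $(0,y)$ and hence produces the gap in {\tt OR}'s item-2 marginal.
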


\begin{proof}
By way of contradiction, without loss of generality assume
$Pr_{(x,y)\sim \Fand} [x < y] > 0$. So for some $0<b<c<H$ we will
have $Pr[{\tt AND}\:bids\: in\:[0,b)\times(c,H)] >0$ (since the
former event is the union over the countable choices of the latter
over all rationals $0<b<c<H$).  Now since $\bFand$ is maximally
correlated, either $\bFand(b,c)=\bFand(b,H)$ or
$\bFand(b,c)=\bFand(H,c)$.  However, the former possibility is in
contradiction to $Pr[{\tt AND}\:bids\: in\:[0,b)\times(c,H)] >0$ so
we must have $\bFand(b,c)=\bFand(H,c)$. This means that whenever
{\tt AND} bids at most $c$ on the second item, it also bids at most
$b$ on the first item.

In such a case, any bid $(0,y)$ for ${\tt OR}$, where $b<y<c$ is
strictly dominated the bid $(b,0)$ (since {\tt OR} wins at least
whenever $(0,y)$ wins, pays strictly less, and this happens with
positive probability since $\lowandb=0$). Therefore the {\tt OR}
never bids $(0,y)$ where $b<y<c$.

However {\tt AND} does bid at least $c$ on the second item with
positive probability. This contradicts Lemma~\ref{interval}.
\qed\end{proof}

\subsection{The exact forms}\label{exact}

 In this section we show that the previous lemma imply that an   equilibrium $(\bFand,\For)$
of the {\tt AND}-{\tt OR} game with $\bFand$ maximally correlated,
must have a specific form.

\begin{lemma} \label{or-form}
$\For(x,y) = \frac{1}{2} x/(1-x) + \frac{1}{2} y/(1-y)$ for $0\le
x,y\le 1/2$.
\end{lemma}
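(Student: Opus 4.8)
The plan is to derive $\For$ from the equilibrium conditions now that we know (by Corollary~\ref{axis}) that {\tt OR} bids only on the axes $(x,0)$ or $(0,y)$ with $x,y>0$, and (by Lemma~\ref{diag}) that {\tt AND} bids on the diagonal $(x,x)$. First I would argue that since {\tt AND} bids $(x,x)$, the CDF $\bFand$ is fully described by a single one-dimensional CDF $G(x)=\bFand(x,H)=\bFand(H,x)=\bFand(x,x)$, and by Corollary~\ref{cor:highestbid} plus the interval results of Section~\ref{s:interval} the support of $G$ is essentially an interval $[0,h]$. When {\tt OR} bids $(x,0)$ with $x$ in the support, it wins item~$1$ iff {\tt AND} bids below $x$ on item~$1$, i.e. with probability $G(x)$ (up to tie-breaking at atoms, which I would handle by noting {\tt OR} has no atoms — to be justified — or by right/left limits), and it never wins item~$2$; so its utility from such a bid is $v\,G(x) - x\,G(x) = (v-x)G(x)$. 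By symmetry the same holds for $(0,y)$. Since {\tt OR} mixes over a range of such bids, the indifference (equal-utility) condition of a mixed equilibrium forces $(v-x)G(x)$ to be constant on the support, and by Lemma~\ref{lem:or-positive} this constant $c$ is strictly positive; hence $G(x) = c/(v-x)$ on the support.

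Next I would pin down the support and the constant $c$ using {\tt AND}'s best-response condition. When {\tt AND} bids $(x,x)$, it wins both items iff {\tt OR}'s single positive bid on the relevant axis is below $x$; since {\tt OR} bids on axis~$1$ with probability $1/2$ and on axis~$2$ with probability $1/2$, and bids on each axis according to some CDF $\Phi$, {\tt AND}'s probability of winning both items with $(x,x)$ is $\Phi(x)$ (the other axis bid is $0<x$, so that item is won automatically). Its utility is thus $\Phi(x) - 2x\cdot(\text{something})$ — more carefully, {\tt AND} pays $x$ on each item it wins; it always outbids {\tt OR} on the axis {\tt OR} is not using, so it always wins that item and pays $x$ there, and wins the other with probability $\Phi(x)$ paying $x$ there too. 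Hence expected payment is $x(1+\Phi(x))$ and utility is $\Phi(x) - x(1+\Phi(x)) = (1-x)\Phi(x) - x$. Wait — I must be careful: {\tt AND} only has value $1$ if it wins \emph{both}; if it wins only the free axis item it gets value $0$ but still pays $x$ there. So the correct utility is $\Phi(x)\cdot 1 - x - x\Phi(x) = \Phi(x)(1-x) - x$. The equilibrium indifference condition for {\tt AND} over the support of its bids forces $\Phi(x)(1-x)-x$ to be constant; at $x=0$ (which is in the support since $\lowanda=0$ by Lemma~\ref{lem:lowand}) this constant is $0$, giving $\Phi(x) = x/(1-x)$. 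Then $\For(x,y) = \frac12\Phi(x) + \frac12\Phi(y) = \frac12\,x/(1-x) + \frac12\,y/(1-y)$, which is the claim; and consistency ($\Phi(1/2)=1$) forces the support to be $[0,1/2]$, which in turn via $(v-x)G(x)=c$ and $G(1/2)=1$ gives $c = v-1/2$, recovering $\Fand^*$ as well (though the lemma only asks for the {\tt OR} form).

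The main obstacle I expect is the careful handling of atoms and the endpoints of the supports — specifically, ruling out that {\tt OR} has an atom anywhere in $(0,1/2]$ and that {\tt AND}'s atom structure (it does have an atom at $0$) interacts badly with {\tt OR}'s winning probabilities at the boundary. If {\tt OR} had an atom at some $x_0>0$, then {\tt AND} would strictly prefer bidding just above $x_0$ to bidding just below, breaking the indifference; I would need to invoke Lemma~\ref{interval} and the standard first-price-auction no-atom arguments (already used repeatedly in Section~\ref{s:interval}) to exclude this, so that $\Phi$ is continuous and the indifference equation holds cleanly as an identity on $[0,1/2]$. A secondary subtlety is justifying that the support of {\tt OR}'s axis-bid distribution really is a full interval $[0,1/2]$ with no gaps (again via Lemma~\ref{interval}, applied with the roles of the players and the structure on each axis), so that the indifference condition applies at every point and the closed-form expression is valid on the whole interval rather than just on a subset. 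Once continuity and full-interval support are in hand, the algebra deriving $\Phi(x)=x/(1-x)$ from the constant-utility condition with boundary value $0$ at $x=0$ is immediate.
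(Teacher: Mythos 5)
Your outline goes wrong at a critical point: after establishing that {\tt OR} bids on the axis and {\tt AND} bids on the diagonal, you write ``since {\tt OR} bids on axis~$1$ with probability $1/2$ and on axis~$2$ with probability $1/2$, and bids on each axis according to some CDF $\Phi$.'' That assumption \emph{is} the content of the lemma. A priori {\tt OR} might bid $(z,0)$ with some probability $1-\alpha$ and $(0,w)$ with probability $\alpha$, and the conditional CDFs on the two axes could differ; nothing proved so far forces $\alpha=1/2$ or equality of the two axis distributions. Because you also restrict attention to diagonal bids $(x,x)$ of {\tt AND}, the indifference condition you write down, $\Phi(x)(1-x)-x=\text{const}$, only pins down the \emph{sum} $\For(x,0)+\For(0,x)=\For(x,x)$, not $\For(x,0)$ and $\For(0,y)$ separately, hence not $\For(x,y)$ for $x\ne y$. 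So even after fixing the arithmetic, your argument yields the diagonal restriction of $\For$, not the full two-dimensional form claimed in the lemma.

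The paper avoids this by considering {\tt AND}'s utility for an arbitrary (off-diagonal) bid $(x,y)$:
$u_{\mathrm{AND}}(x,y)=g_1(x)+g_2(y)$ where $g_1(x)=\For(x,0)(1-x)-\alpha x$ and $g_2(y)=\For(0,y)(1-y)-(1-\alpha)y$ and $\alpha=\For(0,h)$. Since every diagonal point $(x_2,x_2)$ is a best response, if $g_1(x_1)>g_1(x_2)$ then $(x_1,x_2)$ would strictly improve on $(x_2,x_2)$, a contradiction; this forces $g_1$ and $g_2$ to each be constant, and continuity at $0$ makes the constant $0$. That yields $\For(x,0)=\alpha x/(1-x)$ and $\For(0,y)=(1-\alpha)y/(1-y)$ \emph{separately}. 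Then $\For(h,h)=1$ gives $h=1/2$, and evaluating $\For(0,1/2)=1-\alpha$ against the definition $\alpha=\For(0,h)$ closes the loop with $\alpha=1/2$. To repair your proof you would need to replace the appeal to symmetry by this off-diagonal deviation argument (or some equivalent mechanism) that actually determines $\alpha$ and the two conditional axis distributions. Your first paragraph, deriving $(v-x)\bFand(x,x)=\text{const}$ from {\tt OR}'s indifference, is fine but belongs to Lemma~\ref{and-form}, not to this lemma.
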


\begin{proof}
We know that {\tt OR} bids $0$ on exactly one of the items
(Corollary~\ref{axis}) so $\For(x,y)=\For(x,0)+\For(0,y)$. Let
$\alpha = \For(0,h)$  be the probability that {\tt OR} bids $0$ on
the first item, and let $1-\alpha = \For(h,0)$. Assume that {\tt
AND} bids $(x,y)$. With probability $\alpha$ {\tt OR} bids $0$ on
the first item and then 1) {\tt AND} wins the first item and pays
$x$ for it, and 2) {\tt AND} wins the second item with probability
$\For(0,y)$. Similarly, with probability $1-\alpha$ {\tt OR} bids
$0$ on the second item and then {\tt AND} 1) wins the second item
and pays $y$ for it, and 2) wins the first item with probability
$\For(x,0)$. So we conclude that $u_{{\tt AND}}(x,y)= \For(x,0)(1-x)
+ \For(0,y)(1-y) - \alpha x - (1-\alpha)y$.

 Now notice that $u_{{\tt AND}}(x,y)$ is the sum of a function
$g_1(x)=\For(x,0)(1-x)-\alpha x$, that depends only on $x$ and a
 function $g_2(y)=\For(0,y)(1-y)- (1-\alpha) y$ that depends only on $y$.
We claim that $g_1(x)$ must be a constant for all $x\in [0,h]$ and
$g_2(y)$  must be a constant for all $y\in [0,h]$. We prove this
claim for $g_1$, the proof for $g_2$ is the same. Assume for a
contradiction that $g_1(x_1)
> g_1(x_2)$ for some $x_1,x_2 \in [0,h]$. Then the bid $(x_1,x_2)$ of {\tt AND}
strictly dominates the bid $(x_2,x_2)$ in contradiction to the fact
that $(x_2,x_2)$  is a best response of {\tt AND} for $\For$ which
follows from Lemma~\ref{interval} (the support of {\tt AND} is an
interval), Lemma~\ref{lem:lowand} (the interval starts at $0$), and
Lemma~\ref{diag} ({\tt AND} bids on the diagonal).

 As {\tt OR}
does not have an atom at $(0,0)$, $\For(0,y)$ and $\For(x,0)$
approach 0 as $x$ and $y$ approach 0, respectively. Therefore
$u_{{\tt AND}}(x,y)$ approaches 0 as $x$ and $y$  approach 0.  It
follows that we must have   that $g_1(x)=\For(x,0)(1-x)-\alpha x =
0$ for all $0 \le x \le h$, and similarly  $g_2(y)=0$ for all $0 \le
y \le h$. I.e. $\For(x,0) = \alpha x/(1-x)$ and
$\For(0,y)=(1-\alpha)y/(1-y)$ for all $0 \le x,y \le h$. In
particular since $\For(h,h)=\For(0,h)+\For(h,0) = 1$, we have that
$\alpha h/(1-h)+(1-\alpha)h/(1-h)=h/(1-h)=1$ which implies that
$h=1/2$. But then substituting $h=1/2$ into the expression we get
$\For(0,h)=\For(0,1/2)=1-\alpha$. But by its definition
$\alpha=\For(0,h)$ so $\alpha=1/2$ and the lemma follows.
\qed\end{proof}

\begin{lemma} \label{and-form}
$\bFand(x,y)= \frac{v-1/2}{v-min(x,y)}$.
\end{lemma}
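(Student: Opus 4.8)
The plan is to exploit the equilibrium condition for {\tt OR} now that we know, from Lemma~\ref{or-form}, that {\tt OR} plays $\For=\For^*$ with $h=1/2$, and from Lemma~\ref{diag} together with Lemma~\ref{lem:lowand} and Lemma~\ref{interval} that {\tt AND} bids on the diagonal with support $[0,1/2]$. So $\bFand$ is determined by a single-variable CDF $G(x)=\bFand(x,x)=\Fand(x,H)$ on $[0,1/2]$, and it remains only to show $G(x)=(v-1/2)/(v-x)$.

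First I would write down the utility of {\tt OR} for a pure bid on the axis, say $(x,0)$ with $0\le x\le 1/2$. Against the diagonal bidder {\tt AND}, the bid $(x,0)$ wins item $1$ exactly when {\tt AND} bids below $x$ on the diagonal (handling the atom at $0$: since {\tt AND} has an atom at $0$, and by Lemma~\ref{lemma:atom_both} or the tie-breaking discussion {\tt OR}'s positive bid $x>0$ beats it), which happens with probability $G(x)$; it never wins item $2$. So $u_{or}(x,0)=(v-x)G(x)$, using that {\tt OR} pays $x$ when it wins and $0$ otherwise, and its value is $v$. The same computation gives $u_{or}(0,y)=(v-y)G(y)$ by symmetry.

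Next I would invoke the best-response / indifference principle: every bid in the support of $\For$ must give {\tt OR} the same utility, call it $u^*$, and $u^*>0$ by Lemma~\ref{lem:or-positive}. Since $\For^*$ has full support $(0,1/2]$ on each axis (no atoms except possibly at $0$, which carries no mass), we get $(v-x)G(x)=u^*$ for all $x\in(0,1/2]$, hence $G(x)=u^*/(v-x)$. To pin down $u^*$, use the boundary condition $G(1/2)=\Fand(1/2,H)=1$ (the highest bid is $h=1/2$, and {\tt AND}'s marginal reaches $1$ there), which forces $u^*/(v-1/2)=1$, i.e.\ $u^*=v-1/2$. Therefore $G(x)=(v-1/2)/(v-x)$ and $\bFand(x,y)=\bFand(\min(x,y),\min(x,y))=G(\min(x,y))=(v-1/2)/(v-\min(x,y))$, which is the claimed form.

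The main obstacle, and the place I would be most careful, is justifying that {\tt OR}'s winning probability with the bid $(x,0)$ is exactly $G(x)=\Fand(x,H)$ rather than $\lim_{x'\uparrow x}G(x')$ — i.e.\ correctly accounting for ties at the atom of {\tt AND} at $0$ and for the right-continuity of $G$. Since {\tt OR} never has an atom (Lemma~\ref{or-form} gives an atomless $\For$ on each axis), any tie between a positive {\tt OR} bid and an {\tt AND} bid occurs only on a measure-zero set unless {\tt AND} has an atom there; {\tt AND}'s only atom is at $0$, and a positive {\tt OR} bid strictly exceeds it. So for $x>0$ the probability {\tt OR} wins item~$1$ with bid $(x,0)$ is indeed $\Fand(x,H)=G(x)$, and the indifference equation holds with the right-continuous $G$, making the derivation clean. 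A secondary point to check is that a diagonal bid $(x,x)$ with the atom structure still makes $(x,x)$ a genuine best response of {\tt AND} for all $x\in[0,1/2]$ — but that is exactly what Lemmas~\ref{interval}, \ref{lem:lowand}, and \ref{diag} already give us, so no new work is needed there.
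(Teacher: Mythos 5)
Your proposal is correct and takes essentially the same approach as the paper: reduce to the one-variable function $G(x)=\bFand(x,x)$ via the diagonal lemma, write {\tt OR}'s utility at $(x,0)$ as $(v-x)G(x)$, invoke indifference over the support $(0,1/2)$ of $\For$, and pin down the constant with $G(1/2)=1$. The extra care you take about ties and the atom at $0$ is a reasonable refinement of the same argument, not a different route.
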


\begin{proof}
Since {\tt AND} bids on the diagonal, clearly $\bFand(x,y) =
\bFand(min(x,y),min(x,y))$, so it suffices to characterize
$\bFand(x,x)$ for all $x$. The utility for {\tt OR} for bidding
$(x,0)$ is $(v - x)  \bFand(x,x)$.  By Corollary \ref{axis} (lowest
bid of {\tt OR} is $0$), Corollary \ref{cor:highestbid} ({\tt OR}
and {\tt AND} have the same highest bid),  Lemma \ref{or-form} (this
highest bid is $1/2$), and Lemma \ref{interval} ({\tt OR} bids in an
interval) we know that the support of {\tt OR} is the interval
$(0,1/2)$. So for every $0 < x < 1/2$, $(x,0)$ is a best response to
{\tt AND} and thus $(v - x) \bFand(x,x)$ is a constant independent
of $x$. For $x=1/2$ we know that $\bFand(1/2,1/2)=1$ and thus this
constant is $v-1/2$. It follows that $(v - x)  \bFand(x,x) = v-1/2$
for all $x$, i.e., $\bFand(x,x) = (v-1/2)/(v-x)$, and the lemma
follows. \qed\end{proof}

\subsection{Completing the Proof} \label{thm-pf}

Now let us put everything together to prove Theorem~\ref{thm:main}.

\begin{proofof}{Theorem~\ref{thm:main}}
We start by showing the necessary conditions for an equilibrium.
Take an equilibrium $(\Fand,\For)$.  As Lemma~\ref{eqtoeq} shows
$(\bFand,\For)$ is also an equilibrium, with the same allocations,
payments, and  utilities. From Lemmas \ref{or-form} and
\ref{and-form} we have that $\For(x,y) = (x/(1-x) + y/(1-y))/2$ and
$\bFand(x,y)= \min(\Fand(x,H),\Fand(H,y)) = (v-1/2)/(v-\min(x,y))$.
By Lemma~\ref{weak-dom}, $\Fand$ is dominated by $\bFand$.

To show that $\Fand(0,0)=\bFand(0,0)$, assume for a contradiction
that $\Fand(0,0)\neq \bFand(0,0)$. Since by definition $\bFand(x,y)
\ge \Fand(x,y)$, we have that $\Fand(0,0)<\bFand(0,0)=(v-1/2)/v$.
For   $x,y >0$ the utility of {\tt OR} from the bid $(x,y)$ is
$u_{or}(x,y) = \Fand(x,H)(v-x) + \Fand(H,y)(v-y) - \Fand(x,y)\cdot
v$ (for $x=0$ or $y=0$ the atom of $\Fand$ at $x=0$ and $y=0$  may
cause the utility to be lower depending on the tie breaking rule).
Now let $x$ and $y$ approach $0$ and we get at the limit a utility
of $v \cdot (\Fand(0,H) + \Fand(H,0) - \Fand(0,0)) > v \cdot
(\bFand(0,H) + \bFand(H,0) - \bFand(0,0)) = v \cdot \bFand(0,0) =
v-1/2$ (where the inequality follows since the marginal
distributions of $\Fand$ and $\bFand$ are the same). Thus for small
enough $x$ and $y$ the utility of {\tt OR} is strictly greater than
$v-1/2$. This however contradicts our derivation of $\For$ whose
support includes the bid $(0,1/2)$ for which $u_{or}(0,1/2) =
v-1/2$, a contradiction to it being a best response to $\Fand$.
Therefore, $\Fand(0,0)=\bFand(0,0)$.

\medskip

\ignore{ It remains to show that $\Fand(x,0)=\Fand(0,0)=\Fand(0,y)$,
which follows from the fact that it is an equilibrium. For
contradiction assume that $\Fand(x,0)>\Fand(0,0)$ (the other case is
similar). Let $q_x$ be the probability that {\tt AND} wins both
items with bid $(x,0)$. If $q_x=q_0$ then
%the {\tt AND} has a negative utility for
any bid $(z,0)$, where $z\in(0,x]$, is strictly dominated by
$(0,0)$, since it has a strictly higher payment since the low value
of {\tt OR} is $0$. If $q_x > q_0$ then the {\tt OR} can deviate and
always bid at least $\epsilon$ on item $2$, which will increase its
payment by at most $\epsilon$ and increase its probability of
winning by at least $q_x-q_0>0$. Therefore we reached a
contradiction.}

We now show the sufficient conditions for an equilibrium. For the
fixed distributions $\bFand(x,y)= (v-1/2)/(v-\min(x,y))$ and
$\For(x,y) = (x/(1-x) + y/(1-y))/2$ one may directly verify that
$(\bFand,\For)$ is an equilibrium (as was shown in
\cite{HassidimKMN11}). Now take $\Fand$ such that $\bFand(x,y)=
(v-1/2)/(v-\min(x,y))$, $\Fand(0,0)=(v-1/2)/v$, and
$\Fand(x,0)=\Fand(0,0)=\Fand(0,y)$.\footnote{This follows since
$\Fand^*(0,0) = \Fand(0,0) \leq \Fand(0,y) \leq \Fand(0,H) =
\Fand^*(0,H) = \Fand^*(0,0)$.} We first need to show that $\Fand$ is
also a best response to $\For$ (and not just $\bFand$), that is we
need to show that $u_{and}(\Fand,\For)=u_{and}(\bFand,\For)$.  By
Proposition~\ref{better} this will happen whenever the probability
that {\tt AND} wins no item is the same in both cases. Since
$\Fand(x,0)=\Fand(0,0)=\Fand(0,y)$, the probability for {\tt AND}
winning no item is exactly the probability that it bids $(0,0)$
times the probability that {\tt OR} wins the tie at its $0$ bid,
which is the same in $\Fand$ and $\bFand$ since
$\Fand(0,0)=\bFand(0,0)$.

Finally we need to show that $\For$ is also a best response to
$\Fand$. As before, for  $0 < x,y$, the utility of {\tt OR} from a
bid $(x,y)$ is $u_{or}(x,y)=\Fand(x,H)(v-x) + \Fand(H,y)(v-y)
-\Fand(x,y)v$. Notice that $\Fand(x,H)=\bFand(x,H) = (v-1/2)/(v-x)$
so the first and the second terms equal the constant $v-1/2$. It
follows that the maximum utility is obtained as $x$ and $y$ approach
$0$ (since this minimizes the last term $F(x,y)$). The utility of
{\tt OR} when $x$ and $y$ approach $0$, approaches (from below)
$2v-1-\Fand(0,0)v = 2v-1-\bFand(0,0)v = (2v-1)-(v-1/2)=v-1/2$.
Since $\Fand(x,0)=\Fand(0,0)=\Fand(0,y)$, we have that the utility
of {\tt OR}  is $v-1/2$ at any point in the support of $\For^*$, and
hence it is a best response.
%
%But $u_{or}(\Fand,\For) = u_{or}(\bFand,\For)$ (Lemma~\ref{eqtoeq}) which we evaluate as
%$u_{or}(\bFand,\For) = v-1/2$, e.g. by looking at the bid $(0,1/2)$
%which in the support of {\tt OR} in the equilibrium $(\bFand,\For)$.
%\qed
\end{proofof}

\section{Properties of the  equilibrium}

We present few properties of the Nash equilibrium  in Theorem
\ref{thm:main}. Our analysis is a function of the value $v$ (of the
{\tt OR} player). We analyze the probability that each player wins,
the expected revenue and the expected social welfare. By Theorem
\ref{thm:main} all these quantities are identical in every Nash
equilibrium. (The proofs and the figures are in the Appendix.)

First, we derive the probability that the {\tt AND} player wins
(clearly the probability that the {\tt OR} player wins is the
complement). This probability is depicted in Figure
\ref{fig-and-wins}.

\begin{lemma}
\label{lemma:prob-and}
 The probability that the {\tt AND} player
wins is $\frac{\ln(2) - \frac{1}{2}}{v} +
O\left(\frac{1}{v^2}\right) $ and for $v=1$ this probability is
$1/4$.
\end{lemma}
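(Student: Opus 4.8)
The plan is to compute $\Pr[\text{{\tt AND} wins}]$ directly from the explicit equilibrium distributions given in Theorem~\ref{thm:main}, using the fact (also from the theorem) that this probability is the same in every equilibrium, so we may compute it in the specific equilibrium $(\Fand^*, \For^*)$. In that equilibrium {\tt AND} bids $(y,y)$ with $y \sim \Fand^*$, where $\Fand^*(y) = (v-1/2)/(v-y)$ on $[0,1/2]$ (with an atom $1 - 1/(2v)$ at $0$), and {\tt OR} bids $(x,0)$ or $(0,x)$ each with probability $1/2$, with $x \sim \For^*$, $\For^*(x) = x/(1-x)$ on $[0,1/2]$. Conditioning on {\tt AND}'s bid $y$, the event ``{\tt AND} wins both items'' requires {\tt AND} to beat {\tt OR} on the item {\tt OR} actually contests, i.e. {\tt OR}'s contested bid $x$ satisfies $x < y$ (ties have probability zero since $\For^*$ is atomless). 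Thus $\Pr[\text{{\tt AND} wins} \mid y] = \For^*(y) = y/(1-y)$, and
\[
\Pr[\text{{\tt AND} wins}] = \int_0^{1/2} \frac{y}{1-y}\, d\Fand^*(y).
\]

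Next I would carry out this integral. Differentiating, $d\Fand^*(y) = (v-1/2)/(v-y)^2\, dy$ on $(0,1/2]$, and the atom at $0$ contributes nothing since $y/(1-y) = 0$ there. So
\[
\Pr[\text{{\tt AND} wins}] = \left(v - \tfrac12\right) \int_0^{1/2} \frac{y}{(1-y)(v-y)^2}\, dy.
\]
The integrand is a rational function; partial fractions in $y$ over the factors $(1-y)$, $(v-y)$, $(v-y)^2$ gives an elementary antiderivative involving $\ln(1-y)$, $\ln(v-y)$, and $1/(v-y)$. Evaluating between $0$ and $1/2$ produces a closed form in $v$; in particular the $\ln(1-y)$ term yields the $\ln 2$, and expanding the resulting expression in powers of $1/v$ gives the leading term $(\ln 2 - 1/2)/v$. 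For the $v=1$ claim, one substitutes $v=1$ before integrating (the formula degenerates since $1-y$ and $v-y$ coincide), giving $\Pr[\text{{\tt AND} wins}] = \tfrac12 \int_0^{1/2} y/(1-y)^3\, dy$, which evaluates to $1/4$ by the substitution $u = 1-y$.

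The main obstacle is purely computational: doing the partial-fraction decomposition correctly and then extracting the asymptotic expansion in $1/v$ without sign or constant errors — in particular making sure the $O(1/v^2)$ remainder genuinely absorbs all lower-order terms and that the $\ln 2$ coefficient comes out exactly $1/(v)$ times $(\ln 2 - 1/2)$. A secondary point to be careful about is the treatment of the atom of $\Fand^*$ at $0$ and the absence of ties with $\For^*$, so that $\Pr[\text{{\tt AND} wins} \mid y] = \For^*(y)$ really holds for every $y$ in the support; this is where the atomlessness of $\For^*$ (hence tie-break irrelevance, as noted in Section~\ref{section-model--and-or}) is used.
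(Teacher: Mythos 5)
Your setup is exactly the paper's: both reduce $\Pr[\text{{\tt AND} wins}]$ to $\int_0^{1/2}\For^*(y)\,d\Fand^*(y)=(v-\tfrac12)\int_0^{1/2}\frac{y}{(1-y)(v-y)^2}\,dy$, evaluate it by the same partial-fraction antiderivative in $\ln(1-y)$, $\ln(v-y)$, $1/(v-y)$, and handle $v=1$ separately via $\tfrac12\int_0^{1/2}y/(1-y)^3\,dy$. The only difference is that you leave the final algebra and the $1/v$-expansion as a plan rather than carrying them out, whereas the paper does the computation explicitly; the conceptual route is identical and correct.
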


\ignore{
\begin{proof}
If $v\not = 1$ we get that
\begin{eqnarray*}
\Pr[{\tt AND}\; wins] & = & \int_{0}^{1/2} \Fand'(x) \For(x) dx \\
& = & \int_{0}^{1/2} \frac{v-\frac{1}{2}}{(v-x)^2} \frac{x}{1-x} dx \\
& = & \left(v-\frac{1}{2} \right)\left[\frac{\ln \frac{v-x}{1-x}}{(v-1)^2} - \frac{v}{(v-1)(v-x)} \right]_0^{1/2} \\
& = & \left(v-\frac{1}{2} \right)\left[ \frac{\ln(2v-1)}{(v-1)^2} - \frac{v}{(v-1)(v-1/2)} - \frac{\ln v}{(v-1)^2} + \frac{1}{v-1} \right] \\
& = & \left(v-\frac{1}{2} \right) \left[ \frac{\ln(2v-1) - \ln v}{(v-1)^2} - \frac{1/2}{(v-1)(v-1/2)} \right] \\
& = & \frac{(v-{1/2})\ln(2-\frac{1}{v}) - \frac{1}{2}(v-1)}{(v-1)^2} \\
& = & \frac{\ln(2) - \frac{1}{2}}{v} + O\left(\frac{1}{v^2}\right)
\end{eqnarray*}

For $v=1$ a similar calculation shows that
$$
\Pr[{\tt AND}\; wins \mid v=1] = \frac{1}{2} \int_0^{1/2}
\frac{x}{(1-x)^3} dx = \frac{1}{2}\left[ \frac{2x-1}{2(x-1)^2}
\right]_0^{1/2} = \frac{1}{4} \ .
$$
\qed\end{proof} }

 Next we compute the expected revenue. The expected utility of the {\tt AND}
player is $0$ and therefore the revenue from the {\tt AND} player
equals to the probability that it wins. It remains to compute the
revenue from the {\tt OR} player.

\begin{theorem}
\label{thm:rev-OR}
 The expected revenue from the {\tt OR} player is
$1-\ln 2 -O(\frac{1}{v})$. For $v=1$ the expected revenue from the
{\tt OR} player is $1/4$.
\end{theorem}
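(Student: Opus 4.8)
The plan is to compute the expected payment of the {\tt OR} player directly from the equilibrium distributions $\bFand$ and $\For^*$, exploiting the structure established in Theorem~\ref{thm:main}: since all equilibria have the same allocation probabilities and payments, it suffices to evaluate the integral for the specific equilibrium $(\bFand,\For^*)$. First I would recall that {\tt OR} always bids on an axis, say $(x,0)$, with $x$ distributed according to the density of $\For^*(x,0) = \frac12 x/(1-x)$ on $(0,1/2)$; by symmetry the contribution from $(0,x)$ bids is identical, so the expected payment of {\tt OR} equals $\int_0^{1/2} x \cdot \Pr[\textrm{{\tt OR} wins the tied item with bid }x] \, d(\text{CDF})$. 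Since {\tt AND} bids on the diagonal with CDF $\bFand(x,x) = (v-1/2)/(v-x)$ and has no atom except at $0$ (and {\tt OR} never bids $0$ with positive density issue aside), {\tt OR} with bid $x>0$ wins the item precisely when {\tt AND} bids below $x$, i.e.\ with probability $\bFand(x,x) = (v-1/2)/(v-x)$.

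Next I would set up the integral. Writing $\For^*(x) = x/(1-x)$ for the one-dimensional CDF of {\tt OR}'s nonzero bid (so that $(\For^*)'(x) = 1/(1-x)^2$), the expected revenue from {\tt OR} is
\[
R_{or} = \int_0^{1/2} x \cdot \frac{v-\frac12}{v-x} \cdot \frac{1}{(1-x)^2}\, dx.
\]
I would then evaluate this by partial fractions in $x$, decomposing $\frac{x}{(v-x)(1-x)^2}$ into terms of the form $\frac{A}{v-x}$, $\frac{B}{1-x}$, $\frac{C}{(1-x)^2}$, integrate each piece over $[0,1/2]$ to get logarithmic and rational expressions in $v$, and finally expand in powers of $1/v$. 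The leading behavior as $v\to\infty$: the factor $(v-1/2)/(v-x)\to 1$, so $R_{or}\to \int_0^{1/2} x/(1-x)^2\,dx = [\ln(1-x) + 1/(1-x)]_0^{1/2}$-type antiderivative; a short computation gives $\int_0^{1/2} x/(1-x)^2 dx = (1 - \ln 2) - (1/2)$... which I would need to check carefully, but the claimed answer $1 - \ln 2$ suggests the antiderivative is $\ln(1/(1-x)) + x/(1-x)$ evaluated appropriately. For $v=1$ the integrand simplifies to $\frac12 x/(1-x)^3$ and the integral is elementary, yielding $1/4$; this matches the parenthetical claim and also serves as a consistency check against Lemma~\ref{lemma:prob-and} since at $v=1$ the revenue from {\tt AND} is $1/4$ (it wins with probability $1/4$ and pays its bid, utility zero) and total revenue is then $1/2$.

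The main obstacle I anticipate is purely computational bookkeeping: getting the partial-fraction coefficients right (they involve $v$, $1-v$, and $(1-v)^2$ in denominators, which is delicate near $v=1$ and requires taking a limit), and then correctly expanding $\ln(2v-1) - \ln v = \ln(2 - 1/v) = \ln 2 - 1/(2v) + O(1/v^2)$ and combining it with the rational terms so that the $O(1)$ part comes out to exactly $1 - \ln 2$ with the next term being $O(1/v)$. I would also need to be slightly careful about the tie-breaking at {\tt AND}'s atom at $0$: since {\tt OR} never bids $0$ (Corollary~\ref{axis}), {\tt OR}'s winning probability against {\tt AND}'s bid $x>0$ is cleanly $\bFand(x,x)$ with no tie-breaking ambiguity, so the atom does not enter the revenue-from-{\tt OR} computation. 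Finally I would remark that by Theorem~\ref{thm:main} this value is the same in every mixed Nash equilibrium, which is what makes the statement meaningful.
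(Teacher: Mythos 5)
Your proposal is correct and matches the paper's approach: both set up the revenue integral $\int_0^{1/2} x\,\frac{v-1/2}{v-x}\,\frac{1}{(1-x)^2}\,dx$, evaluate it exactly, and expand in $1/v$, with $v=1$ handled separately since the general antiderivative has $(v-1)^2$ in the denominator. One small slip you already flag as needing verification: the antiderivative of $x/(1-x)^2$ is $\ln(1-x)+\frac{1}{1-x}$ (giving $1-\ln 2$ over $[0,1/2]$), not $\ln\!\left(\frac{1}{1-x}\right)+\frac{x}{1-x}$.
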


\ignore{
\begin{proof}
\begin{eqnarray*}
Revenue({\tt OR}) & = & \int_{0}^{1/2}x \For'(x)\Fand(x)dx \\
& = & \int_0^{1/2} x \frac{1}{(1-x)^2}\frac{(v-\frac{1}{2})}{(v-x)} dx \\
& = & \frac{(v-\frac{1}{2})}{(v-1)^2} \left[ v \ln \left(\frac{1-x}{v-x} \right) + \frac{(v-1)}{(1-x)} \right]_0^{1/2}  \\
& = & \frac{(v-\frac{1}{2})}{(v-1)^2} \left[v \ln \left(
\frac{\frac{1}{2}}{v-\frac{1}{2}} \right) + \frac{(v-1)}{1/2}
-\left(v\ln\frac{1}{v} + (v-1) \right) \right] \\
& = & \frac{(v-\frac{1}{2})}{(v-1)^2} \left[v-1 - v\ln 2 + v\ln \frac{v}{v-\frac{1}{2}}  \right] \\
& = & \frac{(v-\frac{1}{2})}{(v-1)^2} \left[v-1 - v\ln 2  + v\ln(1 + \frac{\frac{1}{2}}{v-\frac{1}{2}}) \right] \\
& = & \frac{(v-\frac{1}{2})}{(v-1)^2} \left[v-1 - (v-1)\ln 2 - \ln 2 + v \ln\left(1+\frac{1}{2v-1}\right) \right] \\
& = & (1-\ln 2) \frac{v-\frac{1}{2}}{v-1} - O\left(\frac{1}{v}
\right) \\
&=& 1-\ln 2 - O\left(\frac{1}{v} \right) =0.3068-
 O\left(\frac{1}{v} \right)
\end{eqnarray*}

For $v=1$ a similar calculation shows that the revenue of the {\tt
OR} player is $0.25$.
\qed
\end{proof}
}

 The revenue from the {\tt OR} player is plotted in Figure
\ref{revenue-OR} as a function of the value $v$. The revenue from
the auction (i.e., sum of both players) is shown in Figure
\ref{revenue-OR-AND}.
%\end{proof}

Using the probability that each  player  wins, we can compute the
expected social welfare, which is $(\Pr[{\tt And}\; wins] + v\cdot
\Pr[{\tt OR}\; wins])$.

\begin{theorem}
The expected social welfare is $v-\ln(2)+1/2+\frac{\ln(2)-1/2}{v}
+O(1/v^2)$.
%For $v >> 1$ the social welfare is about $v-\ln(2)+1/2$.
\end{theorem}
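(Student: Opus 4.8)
The plan is to reduce the expected social welfare to a single already-computed quantity and then expand asymptotically in $1/v$. As noted in the proof of Proposition~\ref{better}, in any equilibrium exactly one of the two events ``{\tt AND} wins both items'' and ``{\tt OR} wins some item'' occurs, so $\Pr[\mbox{{\tt OR} wins}] = 1 - \Pr[\mbox{{\tt AND} wins}]$. Writing $\pi = \pi(v)$ for $\Pr[\mbox{{\tt AND} wins}]$, the welfare becomes $W(v) = 1\cdot\pi + v\cdot(1-\pi) = v - (v-1)\pi$, so everything reduces to a sufficiently precise estimate of $\pi$. By Theorem~\ref{thm:main} all of these quantities are the same in every equilibrium, so I may carry out the computation in the canonical equilibrium $(\Fand^*,\For^*)$.

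The integral computation underlying Lemma~\ref{lemma:prob-and} in fact yields the exact closed form
\[
\pi \;=\; \frac{(v-1/2)\,\ln(2-1/v)\;-\;(v-1)/2}{(v-1)^2} \qquad (v\neq 1),
\]
with $\pi = 1/4$ (hence $W = 1$) directly at $v=1$. Substituting into $W = v - (v-1)\pi$ and cancelling one factor of $v-1$ gives the exact identity
\[
W(v) \;=\; v + \frac{1}{2} \;-\; \frac{(v-1/2)\,\ln(2-1/v)}{v-1},
\]
from which the asymptotics can be read off.

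The last step is a routine Taylor expansion: using $\ln(2-1/v) = \ln 2 - \frac{1}{2v} + O(1/v^2)$ and $\frac{v-1/2}{v-1} = 1 + \frac{1}{2v} + O(1/v^2)$, one multiplies the two series and collects the terms of order $1$ and $1/v$, obtaining $W(v) = v - \ln 2 + \frac{1}{2} + \frac{c}{v} + O(1/v^2)$ with the $1/v$-coefficient $c$ determined by the first two Taylor coefficients of $\pi$ at the point $1/v = 0$. The only place that genuinely needs care --- and which I expect to be the single real pitfall --- is the bookkeeping of the error term: the factor $(v-1)$ in $W = v - (v-1)\pi$ turns an $O(1/v^2)$ uncertainty in $\pi$ into an $O(1/v)$ contribution to $W$, so one cannot just substitute the truncated estimate $\pi = \frac{\ln 2 - 1/2}{v} + O(1/v^2)$ of Lemma~\ref{lemma:prob-and} and still claim an $O(1/v^2)$ remainder; the full closed form must be used, and both the constant term and the $1/v$-coefficient of $W$ feed off that expansion. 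Once it is carried out, the claimed asymptotic follows.
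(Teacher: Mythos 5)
Your reduction $W(v) = v - (v-1)\pi$ and your insistence on using the exact closed form for $\pi$ rather than the truncated estimate from Lemma~\ref{lemma:prob-and} are both correct, and your diagnosis of the pitfall --- that the factor $(v-1)$ promotes an $O(1/v^2)$ error in $\pi$ to an $O(1/v)$ error in $W$ --- is exactly right. The problem is that you stop there and declare ``the claimed asymptotic follows'' without actually carrying out the arithmetic. If you do carry it out, you do \emph{not} recover the paper's stated $1/v$-coefficient. From your own identity
\[
W(v) = v + \tfrac{1}{2} - \frac{(v-\tfrac{1}{2})\ln(2-\tfrac{1}{v})}{v-1},
\]
expanding $\tfrac{v-1/2}{v-1} = 1 + \tfrac{1}{2v} + O(1/v^2)$ and $\ln(2-\tfrac{1}{v}) = \ln 2 - \tfrac{1}{2v} + O(1/v^2)$ and multiplying gives
\[
\frac{(v-\tfrac{1}{2})\ln(2-\tfrac{1}{v})}{v-1} = \ln 2 + \frac{\ln 2 - 1}{2v} + O(1/v^2),
\]
so that
\[
W(v) = v - \ln 2 + \tfrac{1}{2} + \frac{1-\ln 2}{2v} + O(1/v^2).
\]
The coefficient $\tfrac{1-\ln 2}{2} \approx 0.153$ does \emph{not} equal $\ln 2 - \tfrac{1}{2} \approx 0.193$. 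So your method, done carefully, actually contradicts the theorem as stated.

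What makes this more than a slip on your part: the paper's stated coefficient $\ln 2 - \tfrac{1}{2}$ is precisely what one gets by committing the very error you warned against, namely substituting $\pi = \tfrac{\ln 2 - 1/2}{v} + O(1/v^2)$ into $W = v - (v-1)\pi$ and dropping the $(v-1)\cdot O(1/v^2) = O(1/v)$ remainder. The resulting expression $v - \ln 2 + \tfrac{1}{2} + \tfrac{\ln 2 - 1/2}{v}$ is therefore only a valid approximation up to $O(1/v)$, not $O(1/v^2)$, and its $1/v$-coefficient is meaningless at the claimed precision. A quick numerical check confirms this: at $v=1000$ the exact welfare is $W \approx 999.8070063$, your corrected formula gives $999.8070062$ (error $\sim 3\times 10^{-8}$, consistent with $O(1/v^2)$), while the paper's formula gives $999.8070460$ (error $\sim 4\times 10^{-5}$, which scales like $1/v$, not $1/v^2$). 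So your proposal has the right idea and the right exact identity, but you should have finished the Taylor expansion; had you done so, you would have discovered that the theorem's $1/v$-term needs to be corrected from $\tfrac{\ln 2 - 1/2}{v}$ to $\tfrac{1-\ln 2}{2v}$.
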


Figure \ref{fig:poa} shows the Price of Anarchy of the equilibrium.
That is we divide the expected social welfare in  equilibrium
$(\Pr[{\tt And}\; wins] + v\cdot \Pr[{\tt OR}\; wins])$ by the
maximum  social welfare, that is $\max\{v,1 \}$. The difference
$\max\{v,1\} - (\Pr[{\tt And}\; wins] + v\cdot \Pr[{\tt OR}\;
wins])$ is shown in Figure \ref{fig:loss}. The expected loss
converges to $\ln(2) - 0.5 \approx 0.19$ as the value $v$ of the
{\tt OR} player goes to infinity.

\bibliographystyle{plain}
\bibliography{firstprice}

%\newpage
\appendix

\section{Acknowledgements}
Research by Avinatan Hassidim was supported in part
 by a grant from the Israel
Science Foundation (ISF), and by a grant from the German Israel Foundation.

Research by Haim Kaplan was supported in part
 by a grant from the Israel
Science Foundation (ISF), by a grant from United States-Israel
Binational Science Foundation (BSF), by The Israeli Centers of
Research Excellence (I-CORE) program, (Center  No. 4/11), and by the
Google Inter-university center for Electronic Markets and Auctions.

Research by Yishay Mansour was supported in part by a grant from the
the Science Foundation (ISF), by a grant from United States-Israel
Binational Science Foundation (BSF), by a grant from the Israeli
Ministry of Science (MoS),  by The Israeli Centers of Research
Excellence (I-CORE) program, (Center No. 4/11) and by the Google
Inter-university center for Electronic Markets and Auctions.

Research by Noam Nisan was supported  by a grant from the Israeli
Science Foundation (ISF), by The Israeli Centers of Research
Excellence (I-CORE) program, (Center No. 4/11) and by the Google
Inter-university center for Electronic Markets and Auctions.

\section{Missing Proofs}

\begin{proofof}{Lemma~\ref{lemma:prob-and}}
If $v\not = 1$ we get that
\begin{eqnarray*}
\Pr[{\tt AND}\; wins] & = & \int_{0}^{1/2} \Fand'(x) \For(x) dx \\
& = & \int_{0}^{1/2} \frac{v-\frac{1}{2}}{(v-x)^2} \frac{x}{1-x} dx \\
& = & \left(v-\frac{1}{2} \right)\left[\frac{\ln \frac{v-x}{1-x}}{(v-1)^2} - \frac{v}{(v-1)(v-x)} \right]_0^{1/2} \\
& = & \left(v-\frac{1}{2} \right)\left[ \frac{\ln(2v-1)}{(v-1)^2} - \frac{v}{(v-1)(v-1/2)} - \frac{\ln v}{(v-1)^2} + \frac{1}{v-1} \right] \\
& = & \left(v-\frac{1}{2} \right) \left[ \frac{\ln(2v-1) - \ln v}{(v-1)^2} - \frac{1/2}{(v-1)(v-1/2)} \right] \\
& = & \frac{(v-{1/2})\ln(2-\frac{1}{v}) - \frac{1}{2}(v-1)}{(v-1)^2} \\
& = & \frac{\ln(2) - \frac{1}{2}}{v} + O\left(\frac{1}{v^2}\right)
\end{eqnarray*}

For $v=1$ a similar calculation shows that
$$
\Pr[{\tt AND}\; wins \mid v=1] = \frac{1}{2} \int_0^{1/2}
\frac{x}{(1-x)^3} dx = \frac{1}{2}\left[ \frac{2x-1}{2(x-1)^2}
\right]_0^{1/2} = \frac{1}{4} \ .
$$
\end{proofof}

\begin{proofof}{Theorem~\ref{thm:rev-OR}}
\begin{eqnarray*}
Revenue({\tt OR}) & = & \int_{0}^{1/2}x \For'(x)\Fand(x)dx \\
& = & \int_0^{1/2} x \frac{1}{(1-x)^2}\frac{(v-\frac{1}{2})}{(v-x)} dx \\
& = & \frac{(v-\frac{1}{2})}{(v-1)^2} \left[ v \ln \left(\frac{1-x}{v-x} \right) + \frac{(v-1)}{(1-x)} \right]_0^{1/2}  \\
& = & \frac{(v-\frac{1}{2})}{(v-1)^2} \left[v \ln \left(
\frac{\frac{1}{2}}{v-\frac{1}{2}} \right) + \frac{(v-1)}{1/2}
-\left(v\ln\frac{1}{v} + (v-1) \right) \right] \\
& = & \frac{(v-\frac{1}{2})}{(v-1)^2} \left[v-1 - v\ln 2 + v\ln \frac{v}{v-\frac{1}{2}}  \right] \\
& = & \frac{(v-\frac{1}{2})}{(v-1)^2} \left[v-1 - v\ln 2  + v\ln(1 + \frac{\frac{1}{2}}{v-\frac{1}{2}}) \right] \\
& = & \frac{(v-\frac{1}{2})}{(v-1)^2} \left[v-1 - (v-1)\ln 2 - \ln 2 + v \ln\left(1+\frac{1}{2v-1}\right) \right] \\
& = & (1-\ln 2) \frac{v-\frac{1}{2}}{v-1} - O\left(\frac{1}{v}
\right) \\
&=& 1-\ln 2 - O\left(\frac{1}{v} \right) =0.3068-
 O\left(\frac{1}{v} \right)
\end{eqnarray*}

For $v=1$ a similar calculation shows that the revenue of the {\tt
OR} player is $0.25$.
\end{proofof}

\section{Missing Figures}

In this section we present the missing figures. All the figures depict properties of the equilibrium, as a function
of the value of the {\tt OR} player; the value of the {\tt AND} player is taken to be $1$.
The graphs are only presented for $v > 0.5$, as otherwise there is a Walrasian equilibrium.
In all the figures the asterisk depicts the crossover point, in which the value of the {\tt OR}
player is 1.

\begin{figure}[htbp]
\centerline{ \epsfig{figure=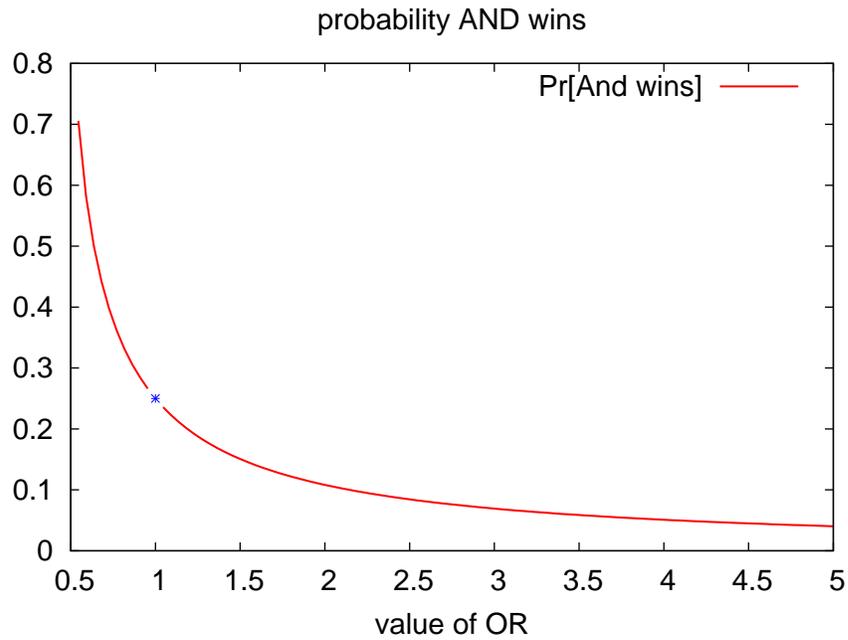} } \caption{The probability
that the {\tt AND} player wins which is the same as the revenue
generated from the {\tt AND} player. \label{fig-and-wins}}
\end{figure}

\begin{figure}[htbp]
\centerline{ \epsfig{figure=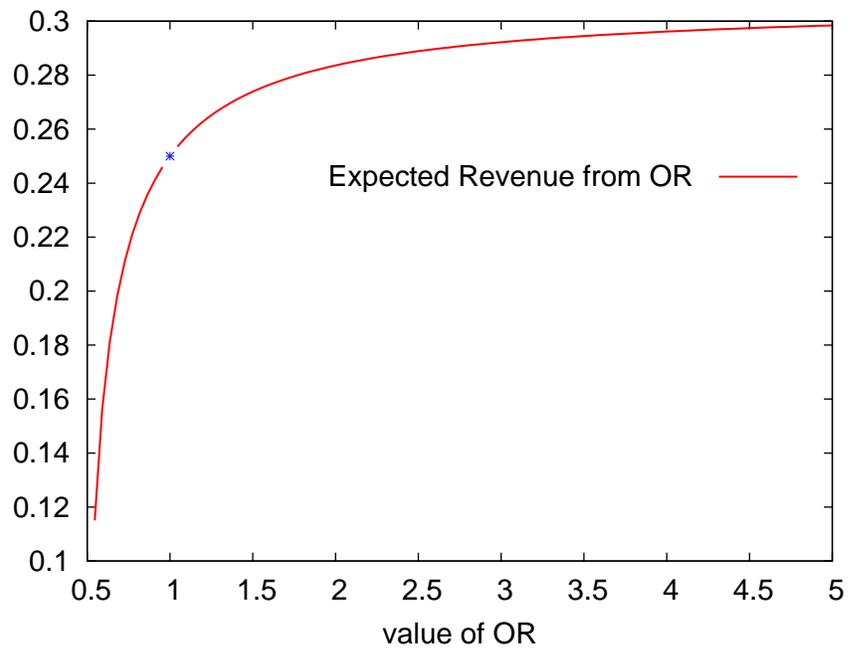} } \caption{The revenue
from the {\tt OR} player. \label{revenue-OR}}
\end{figure}
\begin{figure}[htbp]
\centerline{ \epsfig{figure=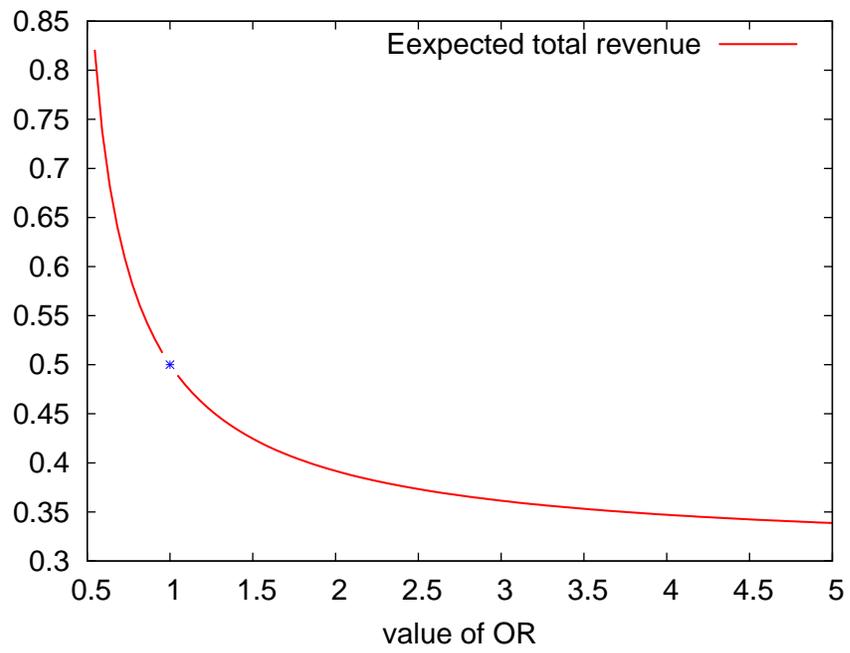} } \caption{The revenue
from the {\tt OR} player and the {\tt AND} player.
\label{revenue-OR-AND}}
\end{figure}

\begin{figure}[htbp]
\centerline{ \epsfig{figure=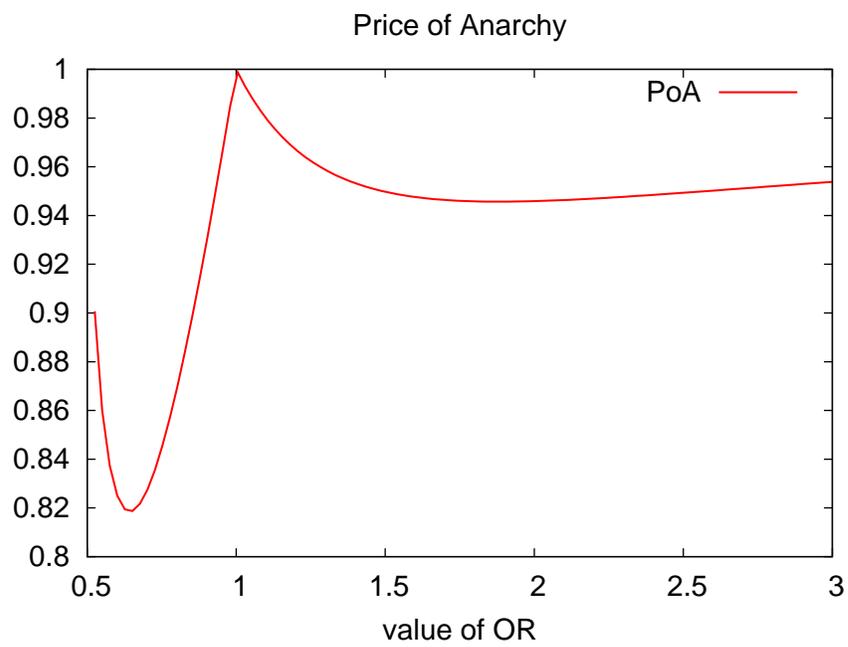} } \caption{The Price of Anarchy
of the {\tt AND}/{\tt OR} game. This is the fraction of the optimal social welfare obtained by the auction. For $v < 1$ it achieves a minimum of
$\approx 0.818485$ at $v \approx 0.643028$. For $v > 1$ it achieves
a minimum of $\approx 0.945682$ at $v \approx 1.87999$.
 \label{fig:poa}}
\end{figure}

\begin{figure}[htbp]
\centerline{ \epsfig{figure=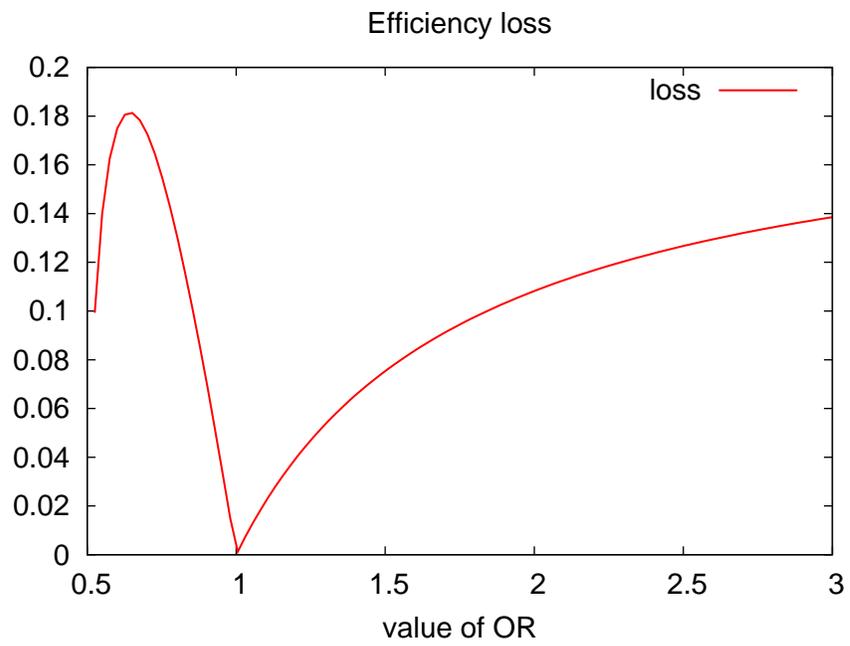} } \caption{The
additive loss in social welfare of the Nash equilibrium of the {\tt
AND}/{\tt OR} game. That is, the optimal social welfare minus the social welfare of the Nash equilibrium.
 \label{fig:loss}}
\end{figure}
\end{document}